\newcommand{\p}[1]{\Pr\left[#1\right]}
\newcommand{\Hf}{\mathrm{H}}
\newcommand{\Hfv}[1]{\mathrm{H}\big(#1\big)}
\newcommand{\wt}{\mathrm{wt}}
\newcommand{\F}{\mathbb{F}}
\newcommand{\N}{\mathbb{N}}
\newcommand{\SDP}[3]{\mathcal{SD}_{#1,#2,#3}}
\newcommand{\tmo}[1]{\tilde{\mathcal{O}}\left({#1}\right)}
\newcommand{\tmt}[1]{\tilde{\Theta}\left({#1}\right)}
\newcommand{\HybridPrange}{\textsc{Hybrid-Prange}}
\newcommand{\PuncturedHybrid}{\textsc{Punctured-Hybrid}}
\newcommand{\Combination}{\textsc{Combined-Hybrid}}
\renewcommand{\vec}[1]{\mathbf{#1}}
\newcommand{\qedeq}{\tag*{$\square$}}
\crefname{supp}{Appendix}{App}
\newlength{\strutdepth}%
\newcommand{\notes}[3]{
	\noindent{%
		\color{#1}{[#3]}\color{#1}}%
	\strut\vadjust{\kern-\strutdepth%
		\vtop to \strutdepth{%
			\baselineskip\strutdepth%
			\vss\llap{{\large\color{#1}\textbf{#2}\quad\color{black}}}\null%
		}%
	}%
}
\algnewcommand{\IfThen}[2]{
\State \algorithmicif\ #1\ \algorithmicthen\ #2\ }
\newcommand{\Vset}[2]{\mathcal{B}_{#1,#2}}
\begin{document}

	\title{An Optimized Quantum Implementation of ISD on Scalable Quantum Resources}

    \titlerunning{An Optimized Implementation of Quantum ISD }
	\authorrunning{~}
	\author{
		Andre Esser \inst{1} \and
		Sergi Ramos-Calderer\inst{1}\inst{2} \and
		Emanuele Bellini\inst{1} \and
		Jos\'e I. Latorre\inst{1}\inst{2}\inst{3} \and
		Marc Manzano\inst{4}\thanks{This work was conducted while the author was affiliated with Technology Innovation Institute.}
	}
	\institute{
		Technology Innovation Institute, United Arab Emirates  \\ \email{\{andre.esser, sergi.ramos, emanuele.bellini, jose.ignacio.latorre\}@tii.ae}
		\and
		Departament de F\'isica Qu\`antica i Astrof\'isica and Institut de Ci\`encies del Cosmos, Universitat de Barcelona, Spain \and 
		Centre for Quantum Technologies, National University of Singapore, Singapore\and
		Sandbox@Alphabet, Mountain View, CA, United States\\
		\email{marc.manzano@google.com}
	}
	
	\maketitle
	\begin{abstract}
	The security of code based constructions is usually assessed by Information Set Decoding (ISD) algorithms. In the quantum setting, amplitude amplification yields an asymptotic square root gain over the classical analogue. However, it is still unclear whether a real quantum circuit could yield actual improvements or suffer an enormous overhead due to its implementation. This leads to different considerations of these quantum attacks in the security analysis of code based proposals. In this work we clarify this doubt by giving the first quantum circuit design of the fully-fledged ISD procedure, an implementation in the quantum simulation library Qibo as well as precise estimates of its complexities. We show that against common belief, Prange's ISD algorithm can be implemented rather efficiently on a quantum computer, namely with only a logarithmic overhead in circuit depth compared to a classical implementation.

    As another major contribution, we leverage the idea of classical co-processors to design hybrid classical-quantum trade-offs, that allow to tailor the necessary qubits to any available amount, while still providing quantum speedups. Interestingly, when constraining the width of the circuit instead of its depth we are able to overcome previous optimality results on constraint quantum search.
	\end{abstract}
	\keywords{ISD, decoding, quantum circuit, classical-quantum trade-offs}
	
	\section{Introduction}
	The growing threat to modern widespread cryptography posed by the advancing development of quantum computers has led to a focus on other hardness assumptions. One of the leading and most promising proposals for post quantum cryptography is code based cryptography. It has a long history of withstanding classical as well as quantum attacks and is considered to rely on one of the most well understood hardness assumptions. The list of the four KEM finalists of the ongoing NIST standardization process for post quantum cryptography \cite{nist} includes one code based proposal (McEliece \cite{chou2020classic}) and two more can be found on the alternate candidate list (BIKE \cite{aragon2017bike} and HQC \cite{melchor2020hamming}) .

    At the heart of all these code based constructions lies the binary decoding or \emph{syndrome decoding} problem. This problem asks to find a low Hamming weight solution $\vec e\in \F_2^n$  to the equation $H\vec e = \vec s$, where $H \in\F_2^{(n-k)\times n}$ is a random binary matrix and $\vec s\in\F_2^{n-k}$ a binary vector.
    
    The best known strategy to solve this problem is based on Information Set Decoding (ISD) \cite{prange1962use}, a technique introduced by Prange in 1962. Since then, there has been a series of works improving on his original algorithm \cite{stern1988method,dumer1991minimum,may2011decoding,becker2012decoding,may2015computing,both2018decoding}, mostly by leveraging additional memory, exploiting some meet-in-the-middle strategies. 
    
    In the quantum setting Bernstein showed how to speed up Prange's algorithm by an amplitude amplification routine~\cite{bernstein2010grover}, which results in an asymptotic square root gain over the classical running time. The translation of advanced ISD algorithm to the quantum setting \cite{kachigar2017quantum, kirshanova2018improved} yields so far only small asymptotic improvements. Further these algorithms rely on the existence of an exponential amount of quantum RAM, which is considered very unrealistic even for mid term quantum developments. Due to this fact, all code based NIST submissions exclude these algorithms when conducting their security analysis. Moreover, the McEliece submission states that "Known quantum attacks multiply the security level of both ISD and AES by an asymptotic factor $0.5 + o(1)$, but a closer look shows that the application of Grover’s method to ISD suffers much more overhead in the inner loop" \cite{chou2020classic}.
    
    So far it was unclear if such a statement is well-founded and how much overhead a quantum implementation of the procedure by Prange would really cause. In this work, we carefully design every part of Prange's algorithm as a quantum circuit, analyze its complexities and show how to incorporate the pieces in a fully-fledged quantum ISD procedure. 
    
    In our design we put a special focus on the necessary amount of qubits. Note that several prior works also focus on qubit reduction in the \emph{few qubits} or \emph{polynomial memory} setting \cite{ekeraa2017quantum,bernstein2017low,helm2020power,biasse2020trade,banegas2021concrete}, in which the quantum algorithm is limited to the use of a polynomial amount of qubits only. Prange's algorithm falls into this regime by default, since asymptotically it \emph{only} uses a polynomial amount of memory. Nevertheless, it is especially this need for memory which limits its applicability, as all code based constructions involve parity-check matrices consisting of millions of bits. Hence, we investigate different optimizations of our initial design with regards to the amount of qubits. Furthermore, we extend the \emph{few qubits} setting by developing hybrid algorithms that enable us to reduce the already polynomial demand of qubits to any available amount while still providing a quantum speedup. 
    
    In this context we leverage the idea of classical co-processors resulting in hybrid trade-off algorithms between classical-time and quantum memory (and time). The idea of such co-processors has mostly been used to parallelize quantum circuits or instantiate circuits under depth constraints, e.g. when analyzing the quantum security of schemes under the {\texttt{MAXDEPTH}} constraint specified by NIST \cite{aragon2017bike,jaques2020implementing,biasse2020framework,biasse2020trade}. Under depth constraints, Zalka \cite{zalka1999grover} showed that the optimal way to perform a quantum search is by partitioning the search space in small enough sets such that the resulting circuit only targeting one set at a time does not exceed the maximum depth. Then the search has to be applied for every set of the partition. However, this optimality result only holds under depth constraints, when instead imposing constraints on the width of the circuit, our trade-offs yield more efficient strategies.
    
    \paragraph{Our Contribution.} As a first contribution we design and analyze the full circuit performing the quantum version of Prange's algorithm. We give precise estimates for the circuit depth and width in the quantum circuit model. Our design shows that, against common belief, Prange's algorithm can be implemented rather efficiently on a quantum computer, namely with only a logarithmic overhead in the depth. Through further optimizations, our width optimized circuit only needs $(n-k)k$ bits to store and operate on the input matrix $H\in\F_2^{(n-k)\times n}$ and roughly $n-k$ ancillas. Additionally, we provide functional implementations of our circuits in the quantum simulation library Qibo \cite{efthymiou2020qibo, stavros_efthymiou_2020_3997195}, which is accessible on github \cite{qisd-code}. We also explore different optimizations regarding the circuit depth, including a quantum version of the Lee-Brickell improvement \cite{lee1988observation} and an adaptation of our circuits to benefit from quasi-cyclic structures in the BIKE / HQC case. 
    
    Our second major contribution is the design of hybrid quantum-classical trade-offs that address the practical limitation on the amount of qubits. In particular, these trade-offs enable quantum speedups for any available amount of qubits. We study the behavior of our trade-offs for various different choices of code parameters. Besides the coding-theoretic motivated settings of full and half distance decoding, this includes also the parameter choices made by the NIST PQC candidates McEliece, BIKE and HQC. Our trade-offs perform best on the BIKE and HQC schemes, which is a result of a combination of a very low error weight and a comparably low code rate used by these schemes. 
    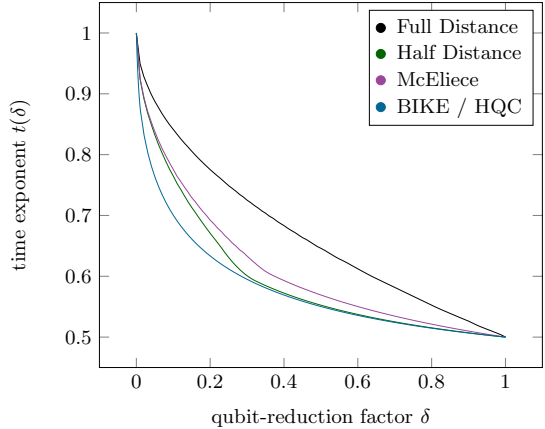
\begin{figure}[ht]
    \centering
	\scalebox{0.85}{	\begin{tikzpicture}
	\begin{axis}[
		y tick label style={
			/pgf/number format/.cd,
			fixed,
			precision=3,
			/tikz/.cd
		},
		x tick label style={
			/pgf/number format/.cd,
			fixed,
			1000 sep={},
			precision=2,
			/tikz/.cd
		},
		ytick={0.5,0.6,0.7,0.8,0.9,1},
		xlabel={qubit-reduction factor $\delta$},
		ylabel={time exponent $t(\delta)$},
		legend cell align={left},
		legend entries={~Full Distance, ~Half Distance, ~McEliece ,~BIKE / HQC}
		]
		\addlegendimage{only marks,black}
		\addlegendimage{only marks,DarkGreen}
		\addlegendimage{only marks,Purple}
		\addlegendimage{only marks,MidnightBlue}
		

		\pgfplotstableread{plots/combined_full_distance_asym.txt}
		\cTradeOff
		\addplot[color=black] table[x = X,y=Y] from \cTradeOff ;

		\pgfplotstableread{plots/combined_half_distance_asym.txt}
		\cTradeOff
		\addplot[color=DarkGreen] table[x = X,y=Y] from \cTradeOff ;
		
		\pgfplotstableread{plots/combined_mceliece_asym.txt}
		\cTradeOff
		\addplot[color=Purple] table[x = X,y=Y] from \cTradeOff ;
		
		\pgfplotstableread{plots/second_bike_asym.txt}
		\cTradeOff
		\addplot[color=MidnightBlue] table[x = X,y=Y] from \cTradeOff ;
	\end{axis}	
\end{tikzpicture}}
	\caption{Comparison of the achieved speedups of our trade-offs $t(\delta)$ (y-axis) plotted as a function of the qubit-reduction factor $\delta$ (x-axis).}
	\label{fig:intro-plot}
    \end{figure}
    
    \cref{fig:intro-plot} shows the behavior of our trade-off achieving the best results under limited width. Here, we measure the performance of the trade-offs in form of a qubit-reduction factor $\delta$ and a speedup $t(\delta)$. In comparison to an entirely quantum based computation, performed using a specific amount of qubits and taking time $T^\frac{1}{2}$, the trade-off reduces the amount of qubits by a factor of $\delta$, while maintaining a time complexity of $T^{t(\delta)}$. For instance in the BIKE and HQC setting we can reduce the amount of qubits to only 1\% ($\delta=0.01$) of an entire quantum based computation and still achieve a speedup of roughly $0.87$ compared to a classical computation.
    
	
	\smallskip
	The rest of this work is structured as follows. In \cref{sec:prelim} we set up the necessary notation, give a precise definition of the problem under consideration and elaborate on the quantum model used for our analysis. In the subsequent \cref{sec:classical-isd} we present the original algorithm by Prange. In \cref{sec:quantum-circuit} we model every step of Prange's algorithm as a quantum circuit and show how to make use of an amplitude amplification step. Finally, in \cref{sec:optimizing} we give improvements for our initial design, including a quantization of the Lee-Brickell improvement as well as our hybrid classical-quantum trade-offs.

	\section{Preliminaries}
	\label{sec:prelim}
	For two integers $a,b\in \N$ with $a\leq b$ let $[a,b]:=\{a,a+1,\ldots,b\}$. Further we write conveniently $[b]:=[1, b]$. Let $H$ be an $m\times n$ matrix and $I\subseteq [n]$, we write $H_I$ to denote the projection of $H$ onto the columns indexed by $I$. We use the same notation for vectors, so for a vector $\vec v$ of length $n$, then $\vec v_I:=(v_{i_1},v_{i_2},\ldots,v_{i_{|I|}})$, where $i_j\in I$. For a binary vector $\vec w$ we define $\wt(\vec w):=|\{i\in[n]\mid w_i=1\}|$ as the Hamming weight of $\vec w$.
	For two reals $c,d\in\mathbb{R}$ we let $\llbracket c,d\rrbracket:=\{x\in \mathbb{R} \mid c\leq x\leq d\}$ be the (including) interval of all reals between $c$ and $d$.
	
	We use standard Landau notation for complexity statements, where 
	$\tilde{\mathcal{O}}$-notation suppresses polylogarithmic factors, meaning $\tilde{\mathcal{O}}\big(f(x)\big)=\mathcal{O}\big(f(x)\log^i f(x)\big)$ for any constant $i$. Besides standard binomial coefficients, for a set $S$, we let $\binom{S}{k}$ denote the set containing all size-$k$ subsets of $S$.
	All logarithms are binary if not stated otherwise. We define $\Hf(x):=-x\log(x)-(1-x)\log(1-x)$ to be the binary entropy function and make use of the well-known approximation
	\begin{align}
	\binom{n}{k}=\tmt{2^{n\Hf\left(\frac{k}{n}\right)}},
	\label{eqn:stirling}
	\end{align}
	which can be derived from Stirling's formula.
	
	\smallskip
	\paragraph{Decoding and linear codes.} A binary linear $[n,k]$\footnote{Note that we also use this notation to indicate the set of integers between $n$ and $k$, but the concrete meaning will be clear from the context.} or $[n,k,d]$ code $\mathcal{C}$ is a $k$ dimensional subspace of $\F_2^n$ and minimum distance $d$, which is defined as the minimum Hamming weight of the elements of $\mathcal{C}$. We call $n$ the code length and $R:=\frac{k}{n}$ the code rate of $\mathcal{C}$. The code $\mathcal{C}$ can be defined via the kernel of a matrix $H\in\F_2^{(n-k)\times n}$, so that $\mathcal{C}:=\{\vec c \in \F_2^n\mid H\vec c^T=\vec 0\}$, where $H$ is called a \emph{parity-check matrix}. Note that for ease of exposition, we treat all vectors as column vectors so that we can omit vector transpositions.
	
    A given point $\vec x=\vec c+\vec e$ that differs from a codeword by an error $\vec e$ can be uniquely decoded to $\vec c$ as long as $\wt(\vec e)\leq \left\lfloor\frac{d-1}{2}\right\rfloor$. This setting, in which the error weight is bounded by half of the minimum distance, is also known as \emph{half distance} decoding. Another well-studied case upper bounds the error weight by the full minimum distance $d$ and is hence known as \emph{full distance} decoding. As the running time of decoding algorithms is solely increasing in the error weight, in those settings, we study the complexity for the case of equality to the respective upper bounds. Also in those cases we assume $d$ to meet the Gilbert-Varshamov bound \cite{gilbert1952comparison,varshamov1957estimate} for random binary linear codes, which gives $d\approx\Hf^{-1}(1-R)n$.
	
	Note that the definition of the code via its parity-check matrix allows to treat the decoding procedure independently of the specific codeword by considering the \emph{syndrome} $\vec s$ of a given faulty codeword $\vec x$, where $\vec s:=H\vec x=H(\vec c+ \vec e)=H\vec e$.
	
	Now, if one is able to recover $\vec e$ from $H$ and $\vec s$, the codeword can be recovered from $\vec x$ as $\vec c = \vec x+ \vec e$. This leads to the definition of the \emph{syndrome decoding problem}.
	
		\begin{definition}[Syndrome Decoding Problem]
	        Let ${\cal C}$ be a linear $[n,k]$ code with parity-check matrix $H\in \F_2^{(n-k)\times n}$ and constant {\em rate} $\frac k n$. For $\vec s\in\F_2^{n-k}$ and $\omega \in [n]$, the \emph{syndrome decoding problem} $\SDP{n}{k}{\omega}$ asks to find a vector $\vec e \in \F_2^n$ of weight $\wt(\vec e)=\omega$ satisfying $H\vec e = \vec s$. We call any such $\vec e$ a \emph{solution} while we refer to $(H,\vec s,\omega)$ as an \emph{instance} of the syndrome decoding problem. 
	\end{definition}

	
	\paragraph{Quantum Circuits.} Our algorithms are built in the quantum circuit model, where we assume  a certain familiarity of the reader (for an introduction see \cite{nielsen2000quantum}). The circuits are presented using general multi-qubit gates for simplicity, but we analyze their depth and complexity using their decomposition into basic implementable gates. Particularly, the decomposition of multi-controlled NOT gates into Toffoli gates is the main factor affecting depth and gate count. 
	
	A multi-controlled NOT gate with $n$ controls can be decomposed using $\mathcal{O}(n)$ regular Toffoli gates \cite{barenco1995elementary}. If $n-2$ ancilliary qubits are available the procedure can be implemented in logarithmic depth. 
	Note that we use the term circuit depth and time complexity interchangeably when analyzing our quantum circuits.
	
	\section{Prange's Information Set Decoding}
	\label{sec:classical-isd}

    Let us introduce the original ISD algorithm by Prange \cite{prange1962use}.
    Given an instance $(H,\vec s, \omega)$ of the $\SDP{n}{k}{\omega}$ Prange's algorithm starts by choosing a random set $I\subseteq [n]$ of size $n-k$ and then solves the corresponding linear system \begin{align}
    	H_I\vec e_1=\vec s
    	\label{eqn:projected-ls}
    \end{align}
    for $\vec e_1$.\footnote{Note that in \cref{alg:prange} we model $H_I$ as the first $n-k$ columns of $HP$, where $P$ is a random permutation matrix.}
    Note that the solution $\vec e_1$ of the above linear system with weight $\omega':=\wt(\vec e_1)$ can always be naively extended to a vector $\tilde{\vec e}$ of length $n$ and weight $\omega'$ satisfying $H\tilde{\vec e}=\vec s$.
    
    For the construction of $\tilde{\vec e}$ one simply sets the coordinates corresponding to the disregarded columns of $H$ to zero. Hence, if $\omega' = \omega$ the vector $\tilde{\vec{e}}$ forms a solution to the syndrome decoding problem. The algorithm now chooses random subsets $I$ until $\omega' = \omega$ holds.
    
    	\begin{algorithm}[t]
    	\begin{algorithmic}[1]
    		\Require{parity-check matrix $H\in\mathbb{F}_2^{(n-k)\times n}$, syndrome $\vec s \in \mathbb{F}_2^{n-k}$, weight $\omega\in[n]$}
    		\Ensure{error vector $\vec e$ with $\wt(\vec e)=\omega$ satisfying $H\vec e=\vec s$ }
    		\Repeat\label{line:prange-repeat}
    		\State choose random permutation matrix $P\in\mathbb{F}_2^{n\times n}$ and set $H_I\leftarrow (HP)_{[n-k]}$\label{line:prange-perm}
    		\State solve linear system $H_I\vec e_1 = \vec s$ for $\vec e_1$\label{line:prange-solve-ls}
    		\Until{$\wt(\vec e_1)=\omega$}
    		\State\Return $P(\vec e_1,0^k)$
    	\end{algorithmic}
    	\caption{\textsc{Prange}}
    	\label{alg:prange}
    	
    \end{algorithm}
    
    Let us briefly analyze when \cref{alg:prange} succeeds in finding the solution. Assuming a unique solution $\vec e$, the algorithm is successful whenever $\vec e$ projected to the coordinates given by $I$ is a solution to the linear system in \cref{eqn:projected-ls}, hence if $\vec e_1=\vec e_I$. This happens whenever $\vec e_I$ covers the full weight of $\vec e$, in which case $I$ or more precisely $[n]\setminus I$ is called an \emph{information set}. Transferred to \cref{alg:prange} this applies whenever, for the permutation chosen in line \ref{line:prange-perm}, it holds that $P^{-1}\vec e=(\vec e_1,0^k)$ for $\vec e_1 \in\F_2^{n-k}$. The probability that the permutation distributes the weight in such a way is
    \begin{align}
    q:=\p{P^{-1}\vec e=(\vec e_1,0^k)}=\frac{\binom{n-k}{\omega}}{\binom{n}{\omega}}\enspace.
    \label{eqn:prob-isd}
    \end{align}
    Hence, the expected number of tries until we draw a suitable permutation $P$ becomes $q^{-1}$ and the expected time complexity is $T=q^{-1}\cdot T_G$, where $T_G$ describes the cost for solving the linear system and performing the weight check.

    \begin{remark}
    Note that in the case of $S$ existent solutions the time complexity to retrieve a single solution with Prange's algorithm becomes $\frac{T}{S}$.
    \label{rem:isd-mult-sols}
    \end{remark}

	\section{A first design of a quantum ISD circuit}
	\label{sec:quantum-circuit}
	In this section we give an initial design for the quantum version of Prange's ISD algorithm. Our design is composed of the following three main building blocks:
	
	\begin{itemize}
		\item[1)] The creation of the uniform superposition over all size-$k$ subsets of $[n]$ (corresponding to the selection of information sets in line \ref{line:prange-perm} of \cref{alg:prange}).\\[-3mm]
		\item[2)] The Gaussian elimination step to derive the error related to a given information set (line \ref{line:prange-solve-ls} of \cref{alg:prange}).\\[-3mm]
		\item[3)] A quantum search for an information set yielding an error of the desired weight (substituting the repeat loop in line \ref{line:prange-repeat} of \cref{alg:prange}).
	\end{itemize} 
	
	Next, we give independent descriptions of our circuit designs for the different steps after which we discuss how to incorporate them in a quantum search. We provide implementations of all described circuits in the quantum simulation library Qibo with the source code accompanying this work.
	
	\subsection{Superposition over size-$k$ subsets}
	We represent a size-$k$ subset $S\subset[n]$ via a binary vector $\vec b$ of length $n$ with exactly $k$ bits set to one, where $i\in S$ iff $b_i=1$. Let $\Vset{n}{k}$ denote the set of all such binary vectors. Our circuit builds the uniform superposition over $\Vset{n}{k}$ in a bit by bit fashion. Grover and Rudolph in \cite{grover2002creating} follow a similar approach of dividing the probability space into smaller parts to provide a feasibility argument for creating quantum states following a broad class of probability distributions. However, they leave it open how to create these quantum states for concrete distributions.
	
	Our design relies on the observation that the fraction of vectors from $\Vset{n}{k}$ starting with a zero or a one respectively is known a priori and independent of subsequent bits. Therefore note that $\Vset{n}{k}$ splits into
	\begin{align}
		|\Vset{n}{k}|=\binom{n}{k}=\underbrace{~|\Vset{n-1}{k}|~}_{\textrm{elements starting with zero}}+\underbrace{~|\Vset{n-1}{k-1}|~}_{\textrm{elements starting with one}}.
		\label{eqn:recursion-superpos}
	\end{align}
	Hence, we start by rotating the first qubit such that we measure a zero with probability $a:=\frac{|\Vset{n-1}{k}|}{|\Vset{n}{k}|}=\frac{\binom{n-1}{k}}{\binom{n}{k}}$ and respectively a one with probability $b:=\frac{|\Vset{n-1}{k-1}|}{|\Vset{n}{k}|}=\frac{\binom{n-1}{k-1}}{\binom{n}{k}}$ , i.e.,
	\[
    \ket{0}\mapsto \sqrt{a}\ket{0}+\sqrt{b}\ket{1}.
	\]
    
	For the second bit we proceed similar. In the case of the first qubit being zero the remaining combinations split analogously to \cref{eqn:recursion-superpos} in $|\Vset{n-1}{k}|=|\Vset{n-2}{k}|+|\Vset{n-2}{k-1}|$, otherwise we find 
	$|\Vset{n-1}{k-1}|=|\Vset{n-2}{k-1}|+|\Vset{n-2}{k-2}|$. This allows us to rotate the second bit accordingly (this time depending on the first qubit) such that
	\begin{align*}
    \sqrt{a}\ket{00}+\sqrt{b}\ket{10}\mapsto&\sqrt{a}\left(\sqrt{\frac{|\Vset{n-2}{k}|}{|\Vset{n-1}{k}|}}\ket{00}+\sqrt{\frac{|\Vset{n-2}{k-1}|}{|\Vset{n-1}{k}|}}\ket{01}\right)\\
    &+\sqrt{b}\left(\sqrt{\frac{|\Vset{n-2}{k-1}|}{|\Vset{n-1}{k-1}|}}\ket{10}+\sqrt{\frac{|\Vset{n-2}{k-2}|}{|\Vset{n-1}{k-1}|}}\ket{11}\right).
	\end{align*}
	We now proceed analogously for the remaining bits, where each bit depends on the state of its successors. A crucial observation is that for a general position $i$, the fraction of elements having the $i$-th bit equal to zero or one  does not depend on the \emph{exact} pattern of the previous $i-1$ bits, but only on their weight. Therefore, consider a general state where the first $i-1$ qubits have already been processed and their weight is equal to $j$.
	Then all combinations for the remaining $n-(i-1)$ bits are given by $\Vset{n-(i-1)}{k-j}$. Again the number of elements starting with a zero (or one respectively), can be derived, analogously to \cref{eqn:recursion-superpos}, as
	\[
	|\Vset{n-(i-1)}{k-j}|=\underbrace{|\Vset{n-i}{k-j}|}_{\textrm{elements starting with zero}}+\underbrace{|\Vset{n-i}{k-j-1}|}_{\textrm{elements starting with one}} .
	\]
    Now, by keeping track of the weight via some auxilliary qubits we can perform the needed rotations for every bit controlled on these ancillas. \cref{alg:superposition} gives a description in pseudocode on how to construct the circuit.
	
	As in general there are $k$ possibilities for the weight of the succeeding bits and we need to process a total of $n$ bits, our circuit design achieves a depth of $\mathcal{O}(n\cdot k)$. 
	To keep track of the weight $c$ of processed bits we use $\lceil\log(k+1)\rceil$ ancillas to store the binary representation of $c$. The binary additions enlarge the circuit depth by a factor of $\mathcal{O}(\log k)$. The execution of the rotation gates controlled by the ancillary register $c$ involves multi-controlled gates with $\log k$ controls, whose decomposition contributes with an additional $\mathcal{O}(\log \log k )$ factor, if we allow for further $\log k$ ancillary qubits. In total this yields a circuit depth of $D_\mathrm{sup}=\mathcal{O}(nk\log k\log\log k)$ using $n+2\lceil\log(k +1)\rceil$ qubits to create the superposition over $\Vset{n}{k}$.

	\begin{algorithm}[t]
		\begin{algorithmic}[]
			\Require{$n,k\in\N$ with $k\leq n$, $n$ qubits $b_i$ and $\lceil \log(k+1)\rceil$ ancilla qubits (to store $c$)}
			\Ensure{Uniform superposition over $\Vset{n}{k}$, represented by the $b_i$'s}
			\State $c \leftarrow k$
			\For{$i=0$ to $n-1$}
			\For{$j=k$ down to $1$}
			\If{$c=j$}
			\State Rotate $b_i$ such that $1$ is measured with probability $\frac{|\Vset{n-i-1}{j-1}|}{|\Vset{n-i}{j}|}$
			\EndIf
			\EndFor
			\If{$b_i=1$}
			\State $c \leftarrow c-1$
			\EndIf
			\EndFor
			\State \Return $\vec b$
		\end{algorithmic}
		\caption{Generate Uniform Superposition over $\Vset{n}{k}$}
		\label{alg:superposition}
	\end{algorithm}
	In \cref{fig:superposition_circuit} we give an example of our circuit for the case of $n=5$ and $k=2$.
	
	\begin{figure}[ht]
    \centering
    \resizebox{0.7\textwidth}{!}{
    \input{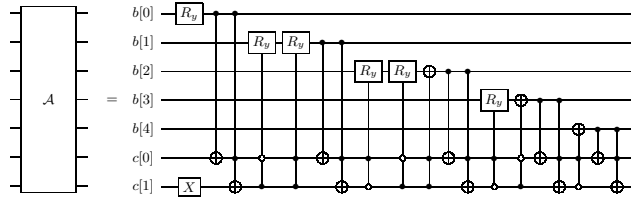}
    }

    \caption{Quantum circuit 
    generating uniform superposition over $\Vset{5}{2}$.}
    \label{fig:superposition_circuit}
    \end{figure}
	
	\subsection{Quantum Gaussian Elimination}
	
	
	Classical Gaussian elimination splits in the transformation to reduced row echelon form and finally the matrix diagonalization or solving step via back substitution. The first step performs a pivot search to enforce a non-zero diagonal entry and eliminates entries below the diagonal, while the latter eliminates entries above the diagonal.
	By performing all operations on the augmented matrix, containing the target vector as a last column, the solution to the system can be obtained from the last column of the final matrix.
	
	Our Gaussian elimination circuit mostly resembles this classical procedure but modeled only with quantum gates.
	To save on depth and width we perform row additions during the transformation to reduced row echelon form only on columns succeeding the current column. Similarly, during the back substitution step row additions are only performed on the actual solution register. The pseudocode to generate our quantum circuit is given in \cref{alg:gaussian}.
	
	
	Our circuit needs no additional qubits besides the description of the linear system and achieves a depth analogous to the classical counterpart of $\mathcal{O}(n^3)$ but involving multi-controlled gates. Our decomposition strategy for these multi-controls introduced by the pivot search (compare to line \ref{line:LS-pivot-search} in \cref{alg:gaussian}) costs an additional factor of $\mathcal{O}(\log n)$ in depth and an additional amount of $n-2$ qubits, resulting in a total depth of $D_\mathrm{gauss}=\mathcal{O}(n^3\log n)$ and a width of $W_\mathrm{gauss}=n^2+ n-2$.
    
	\begin{algorithm}[t]
		\begin{algorithmic}[1]
			\Require{matrix $A\in F_2^{n\times n}$, target vector $\vec t \in \F_2^n$ and $n(n+1)$ qubits}
			\Ensure{$\vec x \in \F_2^n$ with $A\vec x=\vec t$}
			\State Initialize qubits with augmented matrix 
			$A \leftarrow (A \mid \vec t)$
			\For{$i=1$ \textbf{to} $n-1$}
			    \For{$j=i+1$ \textbf{to} $n$}\Comment{pivot search}
			        \If{$\forall_{i\leq k<j} a_{ki}=0$}\label{line:LS-pivot-search}
			            \State add row $j$ to row $i$ starting from column $i+1$
			        \EndIf
			    \EndFor
			    \For{$j=i+1$ \textbf{to} $n$} \Comment{row reduce}
			        \If{$a_{ji}=1$}
			            \State add row $i$ to row $j$ starting from column $i+1$
			        \EndIf
			    \EndFor
			\EndFor
			\For{ $i=n$ down to $2$}\Comment{back substitution}
			    \For{ $j=i-1$ down to $1$}
			        \If{$a_{ji}=1$}
			            \State add row $i$ to row $j$ only on the last column: $a_{j(n+1)}\leftarrow a_{j(n+1)}+a_{i(n+1)}$
			        \EndIf
			    \EndFor
		    \EndFor
		    \State \Return $\vec x = (a_{1(n+1)}, \ldots, a_{n(n+1)})$\label{line:gauss-backsub-add}
		\end{algorithmic}
		\caption{Solve Linear System}
		\label{alg:gaussian}
	\end{algorithm}

	\subsection{Designing a combined circuit}
	\label{sec:combined-circuit}
	Next, we show how to combine both previously introduced building blocks to create a circuit which generates the uniform superposition over the solutions to the linear systems 
	$H_I\vec e_1=\vec s$, for $I\in \binom{[n]}{k}$. 
	
	Our combined circuit works in-place by first swapping the columns corresponding to the selected subset to the front of the matrix. Then we apply the Gaussian Elimination circuit to the first $n-k$ columns.

	\paragraph{Circuit Description.}
 First we generate the uniform superposition over $\Vset{n}{n-k}$, which determines the current selection of columns belonging to $H_I$. 
	Next we swap all columns belonging to $H_I$ to the front of the matrix (controlled on the chosen subset).
	Now, we implement the previously described Gaussian elimination circuit on the first $n-k$ columns of $H$.
	Since the final goal is to find a low weight solution it follows an accumulation of the weight of the solution in a separate register, later used by the amplitude amplification procedure. In \cref{alg:combined-circuit} we give a pseudocode description for the combined circuit generation. \cref{fig:combined} illustrates the different operators needed in order to execute the combined circuit.

	\begin{algorithm}[t]
		\begin{algorithmic}[1]
			\Require{matrix $H\in\F_2^{(n-k)\times n}$, syndrome $\vec s\in \F_2^{n-k}$, $n+(n+1)(n-k)+\lceil\log (n-k)\rceil  $ qubits}
			\Ensure{Uniform superposition over weight of all $\vec e_1$ with $H_I\vec e_1=\vec s$ for $I\in\binom{[n]}{n-k}$}
		\State Initialize qubits with $(H\mid \vec s)$
		\State Generate uniform superposition over $\Vset{n}{n-k}$ on qubits $(b_1,\ldots,b_n)$
		\For{$i = 1$ \textbf{to} $n$} \Comment{swap $H_I$ to the front}
		    \If{$b_i = 1$}
        		\For{$j = i-1$ \textbf{down to} $1$}
        		    \State swap column $j$ and $j+1$
        		\EndFor
        	\EndIf
		\EndFor
		\State Apply Gaussian elimination circuit (\cref{alg:gaussian})
		\State \Return $c \leftarrow$ weight of last column  
		\end{algorithmic}
		\caption{Combined Circuit}
		\label{alg:combined-circuit}
	  
	\end{algorithm}
	\paragraph{Circuit Complexity.} In total the amount of qubits necessary for our initial design can be summarized as 
	\begin{align*}
	W_\mathrm{combined} &= \underbrace{n}_\mathrm{Permutation} + \underbrace{(n-k)\cdot (n+1)}_\mathrm{Matrix} + \underbrace{n-k-2}_\mathrm{Auxiliary}\\
	&=(n-k+1)(n+2)-4.
	\end{align*}
	Note that all subroutines use the same auxilliary qubits as we ensure they return to the zero state after each procedure.
	
	The depth of our circuit is dominated by the circuit that swaps the columns belonging to $H_I$ to the beginning of $H$ as well as the Gaussian elimination and summarizes as 
	\begin{align}
	\begin{split}
	D_\mathrm{combined}=\mathcal{O}\big(&\underbrace{n(n-k)^2\log(n-k)}_\mathrm{Gaussian}+\underbrace{n(n-k)\log (n-k)\log\log(n-k)}_\mathrm{Permutation}\\
	&+\underbrace{n^2(n-k)}_\mathrm{Swaps}\big)=\mathcal{O}(n^3\log n)
	\end{split}
	\label{eqn:depth-combined}
	\end{align}

	\begin{figure}[t]
    \centering
    \resizebox{0.25\textwidth}{!}{
    \input{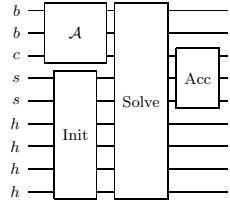}
    }

    \caption{Different parts detailed in \cref{alg:combined-circuit} as quantum operators. $\mathcal{A}$ represents the creation of superposition, \textbf{Init} initializes the registers with $(H\mid \vec s)$, \textbf{Solve} solves the linear system, \textbf{Acc} accumulates the weight of the solution.}
    \label{fig:combined}
    \end{figure}    
	
	\subsection{Amplitude Amplification}
	
	Amplitude amplification was introduced as a generalization of Grover's algorithm \cite{grover1996fast} in \cite{brassard1997exact,grover1998quantum} and analogously allows to obtain a square-root advantage over a classical search. More precisely, given a quantum operation $\mathcal{A}$, that creates a quantum state with non-zero overlap with a target state of amplitude $a$, one can amplify the probability of measuring the desired state to $\Theta(1)$ using $\mathcal{O}(1/a)$ iterations of operator $\mathcal{A}$, whereas classical sampling would require $\mathcal{O}(1/a^2)$. The amplitude amplification operator is defined as
	\begin{equation}
	    \mathcal{Q}=-\mathcal{A}S_0\mathcal{A}^\dagger S_t,
	\end{equation}
	where $S_0$ and $S_t$ are operators that flip the sign of the initial state and target state respectively. 
	The $\mathcal{Q}$ operator, when applied to the quantum state $\mathcal{A}\ket{0}$ amplifies the probability of measuring the quantum state targeted by $S_t$. Applying the operator $\lceil\pi/(4a)\rceil$ times then results in measuring one of the target states with probability close to $1$.
	

	\begin{figure}[ht]
    \centering
    \resizebox{0.6\textwidth}{!}{
    \input{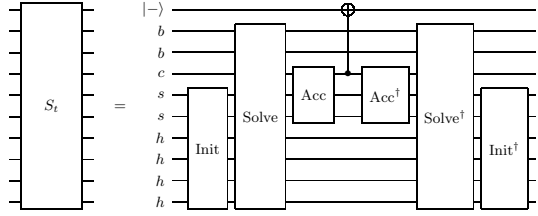}
    }
    \caption{Oracle designed to change the sign of all states with target Hamming weight after solving the linear system. }
    
    \label{fig:st}
    \end{figure}

	In our case $\mathcal{A}$ is the operator that creates the uniform superposition over all size-$(n-k)$ subsets. The oracle $S_t$ is comprised of the Gaussian elimination circuit and an ancilla initialized in the $\ket{-}$ state to flip the sign of the states with target Hamming weight. Eventually, the inverse of the Gaussian elimination circuit is applied in order to clean up the ancillary register. The shape of $S_t$ is detailed in \cref{fig:st}.  

    
    %
    
	 Summarizing, we can find a solution to the syndrome decoding problem after $\mathcal{O}(\sqrt{q^{-1}})$ applications of operator $\mathcal{Q}$, where $q$ is the proportion of suitable subsets yielding a solution among all size-$(n-k)$ subsets defined in \cref{eqn:prob-isd}.  
	 
	 The expected time complexity, becomes $\mathcal{O}(\sqrt{q^{-1}}T_{\mathcal{Q}})$, where $T_{\mathcal{Q}}$ is the cost of performing the quantum operator $\mathcal{Q}$. Note that $T_{\mathcal{Q}}=\mathcal{O}(D_\mathrm{combined})$ is dominated by the cost of the combined circuit given in \cref{eqn:depth-combined}. Hence, the depth of the full quantum ISD procedure can be summarized as
	\begin{align}
	    D_\mathrm{full}=\mathcal{O}(\sqrt{q^{-1}}D_{\mathrm{combined}}).
	    \label{eqn:depth-full}
	\end{align}
    The amplitude amplification procedure only requires a single additional qubit for the sign flip, hence the number of qubits is equal to $W_\mathrm{full}=W_\mathrm{combined}+1$.
    
	\begin{remark}
	 Note that similar to \cref{rem:isd-mult-sols}, in the case of $S$ existent solutions the proportion of subsets yielding a solution increases to $S\cdot q$. Hence $\mathcal{O}(\sqrt{(Sq)^{-1}})$ applications of the operator $Q$ suffice to find any one of these solutions.
	 \label{rem:isd-quant}
    \end{remark}	

	\section{Optimizing the quantum circuit}
	\label{sec:optimizing}
	In this section we introduce further optimizations of our initial design. First we show how to extend our circuit by the ISD improvement made by Lee-Brickell \cite{lee1988observation} and how to exploit the cyclicity in case of BIKE and HQC. Then, we show how, in the case of Prange, a preprocessing of the matrix to \emph{systematic form} and a clever adaptation of our combined circuit allows us to save $(n-k)^2$ input qubits without any increase in depth.
	\subsection{Quantum Lee-Brickell}
	\label{sec:lee-brickell}
	Lee and Brickell \cite{lee1988observation} observed that allowing for a small weight $p$ outside of the selected subset of columns can yield a polynomial runtime improvement.
    Therefore they aim for a permutation $P$, that distributes the weight on $P^{-1}\vec e=(\vec e_1,\vec e_2)\in \F_2^{n-k}\times\F_2^k$ such that 
    \begin{align}
        \wt(\vec e_1)=\omega - p \textrm{ and } \wt(\vec e_2) = p.
        \label{eqn:distribution-lee-brickell}
    \end{align}
    Again by Gaussian elimination (here modelled via the multiplication by the matrix $Q$) one transforms the identity $HP(\vec e_1,\vec e_2)=\vec s$ into
    \[
    QHP(\vec e_1,\vec e_2)=(I_{n-k}H_2)(\vec e_1,\vec e_2)=Q\vec s.
    \]
    The algorithm then enumerates in each iteration all possible candidates for $\vec e_2$ of weight $p$ and checks for every such candidate $\vec x$ if 
    \[
    \wt(\vec e_1)=\wt(Q\vec s + H_2 \vec x)=\omega -p,
    \]
    and if so outputs the solution $\vec e = P(\vec e_1,\vec x)$.
    Note that the probability for the permutation satisfying \cref{eqn:distribution-lee-brickell} improves to 
    \begin{align}
    q_\textrm{LB}=\frac{\binom{n-k}{\omega-p}\binom{k}{p}}{\binom{n}{\omega}}.
    \label{eqn:prob-lee-brickell}
    \end{align}
    On the downside in each iteration now the Gaussian elimination as well as the enumeration of all candidates must be performed which results in a total classical running time of 
    \[
    T= (q_\textrm{LB})^{-1}\cdot\left(T_\textrm{G}+\binom{k}{p}\right),
    \]
	which is optimal for a $p$ satisfying $T_\textrm{G}\approx\binom{k}{p}$, which implies constant $p$ and hence a polynomial speedup.
	
	\paragraph{Circuit Adaptation.} 
	Note that the Lee-Brickell improvement requires knowledge of $H_2$, which corresponds to the last $k$ columns of $H$ after the Gaussian elimination. Hence, we need to extend all row additions of \cref{alg:gaussian} to be also applied to these columns.\footnote{Precisely, the row additions of the back substitution step (line \ref{line:gauss-backsub-add} of \cref{alg:gaussian}) now need to be applied on the last $k+1$ columns rather than only on the last column.} Note that this does not affect the running time in $\mathcal{O}$-notation.
	
	Now, after the Gaussian elimination (compare to \cref{alg:combined-circuit}) we add a circuit that for each selection of $p$ columns of $H_2$, adds those to the last column. After the addition we check if the weight of the last column is equal to $\omega - p$ and if so set the control bit of the amplitude amplification procedure to one. After that, we reverse the addition by adding those columns again. The total depth of this enumeration circuit is $\mathcal{O}(p\binom{k}{p})$.
	
	\medskip
	The circuit depth of the Lee-Brickell quantum algorithm can be summarized similar to before in the case of Prange as 
	
	\[
	D_\textrm{LB}=\mathcal{O}\left(\sqrt{(q_\textrm{LB})^{-1}}\left(n^3\log n +p \binom{k}{p}\right)\right).
	\]
	In terms of width, the Lee-Brickell circuit has the same performance as our initial Prange design, namely
	
	\[
	W_\textrm{LB}= W_\textrm{full}= (n-k+1)(n+2)-3.
	\]
	
	The source code accompanying this work also provides an implementation of this improvement in \emph{Qibo} \cite{qisd-code}.
	 
	\subsection{The case of BIKE and HQC -- exploiting the cyclicity}
	BIKE and HQC use double-circulant codes of rate $\frac{1}{2}$, i.e., $n=2k$. For those codes, a given syndrome decoding instance $(H,\vec s,\omega)$ allows to obtain $k$ different instances $(H,\vec s_i,\omega)$\footnote{All defined on the same parity-check matrix $H$ and with the same error-weight $\omega$.}, where the solution to any one of these instances leads directly to a solution to the original instance. Sendrier has shown \cite{sendrier2011decoding} that in the case enumeration based ISD, this setting allows for a classical-speedup of $\sqrt{k}$, known as \emph{Decoding-One-Out-of-Many}. However, we observe that in the case of Prange these $k$ instances also allow for a \emph{quantum}-speedup of $\mathcal{O}(\sqrt{k})$ (corresponding to a classical speedup of order $k$). Therefore instead of performing the Gaussian elimination on the matrix $(H\mid \vec s)$ it is performed on $(H\mid \vec s_1\mid\cdots\mid\vec s_k)$. Now, whenever one of the last $k$ columns after the Gaussian elimination admits weight $\omega$ we set the sign-flip bit of the amplitude amplification procedure. Note that this change does not effect the depth of the circuit in $\mathcal{O}$-notation. It increases the cost for a row addition from $\mathcal{O}(n)$ to $\mathcal{O}(n+k)=\mathcal{O}(n)$ and the weight-accumulation has to be performed $k$ times instead of once, which is still surpassed by the cost of the Gaussian elimination. On the upside there exist a total of $k$ solutions to these $k$ instances, which according to \cref{rem:isd-quant} results in a speedup of $\sqrt{k}$ of the quantum search. In terms of width we need $(n-k)(k-1)$ more qubits to represent all $\vec s_i$.
	
	This strategy is also compatible with the Lee-Brickell improvement from the previous \cref{sec:lee-brickell}. Therefore we need to extend the enumeration part to all $\vec s_i$ increasing the cost of that step by a factor of $k$ to $\mathcal{O}(k p (n-k) \binom{k}{p})$. We summarize the width and depth of this combination in \cref{tab:circuit-complexities} at the end of \cref{sec:trade-offs}.
	
	\subsection{Saving quadratically many qubits for free}
		\begin{figure}[t]
	\centering
		\scalebox{0.5}{\begin{tikzpicture}

\node at (0.6,-2.25) {\huge$\mathbf{e}$};
\node at (6.2,0.25) {\huge$\mathbf{s}$};

\draw[opacity=0.5,densely dotted] (-4,2) -- (0.5,-1.5);
\node at (-1.8,0.25) {\huge$I_{n-k}$};
\draw[opacity=0.5,dashed] (0.5,2) node (v1) {} -- (0.5,-1.5);

\node at (2.75,0.25) {\huge$H'$};

\draw  (-4,2) rectangle (5.45,-1.5);
\draw  (-4,-2) rectangle (5.45,-2.5);
\draw[]  (5.95,2) rectangle (6.45,-1.5);
\node at (-5,0.25) {\huge$H$};

\end{tikzpicture}}
		\caption{Problem shape for input matrix in systematic form.}
		\label{fig:systematic-form}
	\end{figure}
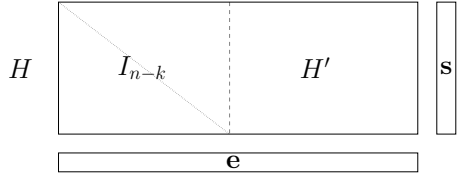
	\label{sec:width-improvement}
    In the following we apply a Gaussian elimination to the first $n-k$ columns of $H$ resulting in a problem shape as shown in \cref{fig:systematic-form}, also known as \emph{systematic form}.\footnote{If the first $n-k$ columns do not form a matrix of full rank we permute the columns accordingly.}
	We now describe how to adapt our circuit to only require the matrix $H'$ as well as the corresponding syndrome as an input, reducing the amount of qubits by $(n-k)^2$.
	
	In a classical setting, preprocessing the matrix to systematic form allows to reduce the number of operations needed for subsequent Gaussian eliminations~\cite{bernstein2008attacking}. This holds as long as there is a non-empty intersection between the new subset of columns and the first $n-k$ columns, which already form the identity matrix.

	Inspired by the classical time complexity improvement, we implement for columns from the identity part that belong to the selected subset only a corresponding row-swap. 
	After the necessary row-swaps are performed, all columns of $H'$ belonging to the corresponding subset are swapped to the \emph{back}. Subsequently we perform the Gaussian elimination only on the last columns of $H'$ that belong to the current selection. This procedure is depicted in \cref{fig:saving-qubits}, which shows the state of the matrix after all three operations have been performed for the chosen subset. Note that the first $n-k$ columns only serve an illustrative purpose and are not part of the input.
    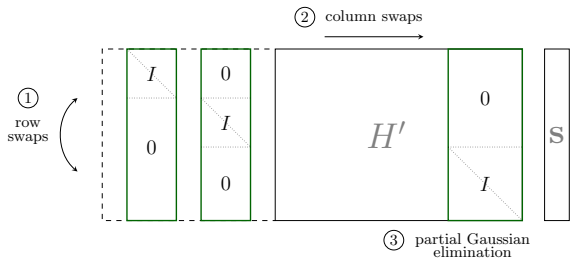
\begin{figure}[t]
	\centering
		\scalebox{0.65}{\usetikzlibrary{patterns}

\begin{tikzpicture}

\pgfdeclarepatternformonly{my crosshatch dots}{\pgfqpoint{-1pt}{-1pt}}{\pgfqpoint{5pt}{5pt}}{\pgfqpoint{6pt}{6pt}}%
{
    \pgfpathcircle{\pgfqpoint{0pt}{0pt}}{.5pt}
    \pgfpathcircle{\pgfqpoint{3pt}{3pt}}{.5pt}
    \pgfusepath{fill}
}

\node[opacity=0.5] at (6.2,0.25) {\huge$\mathbf{s}$};

\node[opacity=0.5] at (2.75,0.25) {\huge$H'$};

\draw  (0.5,2) node (v1) {} rectangle (5.5,-1.5);

\draw[]  (5.95,2) rectangle (6.45,-1.5);

\draw[dashed] (v1) -- (-3,2) -- (-3,-1.5) -- (0.5,-1.5);

\draw[thick,DarkGreen]   (-2.5,2) node (v2) {} rectangle (-1.5,-1.5);
\draw[thick,DarkGreen]   (-1,2) rectangle (0,-1.5);
\draw[thick,DarkGreen]  (4,2) rectangle (5.5,-1.5);
\draw[opacity=0.5,densely dotted]  (-2.5,2) -- (-1.5,1);
\draw[opacity=0.5,densely dotted]  (-2.5,1) -- (-1.5,1);
\draw[opacity=0.5,densely dotted]  (-1,1) -- (0,0);
\draw[opacity=0.5,densely dotted]  (-1,0) -- (0,0);
\draw[opacity=0.5,densely dotted] (4,0) -- (5.5,-1.5);
\draw[opacity=0.5,densely dotted] (4,0) -- (5.5,0);
\draw[opacity=0.5,densely dotted]  (-1,1) -- (0,1);
\node at (-2,1.5) {\large$I$};
\node at (-2,0) {\large$0$};
\node at (-0.5,0.5) {\large$I$};
\node at (-0.5,-0.75) {\large$0$};
\node at (-0.5,1.5) {\large$0$};
\node at (4.75,-0.75) {\large$I$};
\node at (4.75,1) {\large$0$};

\draw[<->, >=stealth] (-3.5,-0.5) to[in=220,out=140] (-3.5,1);
\node[align=center] at (-4.5,0.3) {row \\[-0.1cm]swaps};
\draw  (-4.5,1) ellipse (0.2 and 0.2) node {1};
\draw  (1.1,2.65) ellipse (0.2 and 0.2) node {2};
\draw[->, >=stealth](1.5,2.25) -- (3.5,2.25);
\node at (2.5,2.65) {column swaps};
\draw  (2.9,-1.9) ellipse (0.2 and 0.2) node {3};
\node[align=center] at (4.5,-2) {partial Gaussian\\[-0.1cm] elimination};
\end{tikzpicture}}
		\caption{Procedure to perform quantum version of Prange without first $n-k$ columns as input. Colored framed parts indicate columns belonging to the current selected subset.}
		\label{fig:saving-qubits}
	\end{figure}
	
	\paragraph{Circuit Adaptation}. In the following we assume the input matrix to be in \emph{systematic form}. Now, whenever the selected subset includes any column $\vec h_j$ for $j\leq n-k$ of $H$, a single row swap is sufficient to obtain the desired unit vector for this column. 
    This implies that all operations resulting from column $\vec h_j$ of $H$, where $j\leq n-k$, being part of the selected subset as column $i$ are fully determined by $j$ and $i$. The necessary operations can, hence, be embedded into the circuit directly without the need of the first $n-k$ columns as an input. However, this design requires to implement the procedure for every combination of $j$ and $i$, which are $\mathcal{O}(n^2)$ possibilities. Furthermore, each swap comes at a cost of $\mathcal{O}(n)$. 
	
	Now, it follows the part of the circuit swapping all columns of $H'$ belonging to the respective subset to the back. During these swaps we craft an ancillary state $x=\ket{0^{k-r}1^r}$, where $r$ is the number of selected columns from $H'$. This state allows us to perform the Gaussian elimination only on the last $r$ columns of $H'$ (by controlling all operations that depend on column $j$ of $H'$ on bit $x_j$). Since we do not know a priori how many columns that might be, we start the Gaussian elimination, contrary to the description before, from the back, where it always starts with the last column.
	
	In total, our modified circuit then needs an amount of qubits equal to
    \begin{align}
	W_\mathrm{improved} &= \underbrace{n}_\mathrm{Permutation} + \underbrace{(n-k)\cdot (k+1)}_\mathrm{Matrix} + \underbrace{k + n -k- 2}_\mathrm{Auxiliary} + \underbrace{1}_\mathrm{Sign-flip}\nonumber\\
	&=(n-k+2)(k+3)-7,
	\label{eqn:reduced-qubits}
	\end{align}
	since we save the identity part, but need an additional $k$ ancillas for representing $x$. The depth is still dominated by the Gaussian elimination on the last (possibly) $n-k$ columns and, hence, stays as in the inital design in \cref{eqn:depth-full}. We again provide an implementation of this design with the source code belonging to this work. A pseudocode description of the adapted combined circuit can be found in \cref{app:width-reduce}.
	
	\section{Classical-time quantum-memory trade-offs}
	\label{sec:trade-offs}
	Despite our optimization, the quadratic amount of qubits required for the representation of the input matrix is still the limiting factor with respect to real quantum implementations. 
	
	We overcome this issue by introducing hybrid trade-offs between classical-time and quantum-memory for our ISD circuit, allowing for an adaptive scaling of the algorithm to the available amount of qubits. Our trade-offs divide in a classical and quantum computation part, where a decrease of the amount of qubits comes at the cost of an increased classical running time. 
	Since this increase in running time is of exponential nature we neglect the polynomial factors of the implementation by switching to $\tilde{\mathcal{O}}$-notation.
	Our trade-offs allow for a smooth interpolation between purely classical computations at a running time of 
	\begin{align}
	T_\mathrm{C}:=\tmo{\frac{\binom{n}{\omega}}{\binom{n-k}{\omega}}},
	\label{eqn:classical-complexity}
	\end{align}
	(compare to the analysis in \cref{sec:classical-isd}) and a purely quantum based computation taking time $\sqrt{T_\mathrm{C}}$, as given in \cref{eqn:depth-full}. We interpolate between both complexities using a qubit reduction factor $\delta$, where a fully classical computation corresponds to $\delta=0$ and an entirely quantum based execution implies $\delta=1$. For each trade-off we then state the running time for a given reduction factor $\delta$ as $t(\delta)\in\llbracket 0.5,1\rrbracket$, meaning that a reduction of the amount of qubits by a factor of $\delta$ implies a total running time of $\left(T_\mathrm{C}\right)^{t(\delta)}$.
	
	We start at first with a quite straightforward trade-off, which already achieves a better than linear dependence between $\delta$ and $t(\delta)$. This first trade-off also achieves good results for concrete medium sized parameters. After that, we present a second trade-off which asymptotically outperforms the first one. However, for concrete parameters in medium scale both trade-offs remain superior to each other for certain values of $\delta$. Finally, we show how to combine both trade-offs to obtain an improved version. For large reduction factors, meaning close to one, this combination obtains the minimum of both previous trade-offs while for small reduction factors, which are most important when aiming at near future realizations, an improved running time is achieved in most settings. 
	
	We also provide an implementation of our classical co-processor in Sage that invoke the quantum circuit (implemented in the quantum simulation library Qibo) on the respective instances. 
	\begin{figure}[t]
	\centering
	
		\scalebox{0.6}{\begin{tikzpicture}
\node at (-5,-2.25) {\huge$\mathbf{e}$};
\node at (6.2,0.5) {\huge$\mathbf{s}$};

\draw[opacity=0.5,densely dotted] (-4,2) -- (0.5,-1.5);
\node at (-2.05,0.5) {\huge$I_{n-k}$};
\draw[opacity=0.5,dashed] (0.5,2) node (v1) {} -- (0.5,-1.5);

\node at (2.75,0.5) {\huge$H'$};
\fill[pattern=north west lines]  (-4,-2) rectangle (3.8,-2.5);
\fill[color=white,opacity=0.5]  (-4,-2) rectangle (3.8,-2.5);
\draw (3.8,-2) -- (3.8,-2.5);

\node at (4.65,-2.25) {\Large$\mathbf{0}$};
\fill[opacity=0.5,pattern=crosshatch]  (3.8,2)  rectangle (5.45,-1.5);
\fill[opacity=0.65,color=white]  (3.8,2) rectangle (5.45,-1.5);

\draw  (-4,2) rectangle (5.45,-1.5);
\draw  (-4,-2) rectangle (5.45,-2.5);
\draw[thick]  (5.95,2) rectangle (6.45,-1.5);
\node at (-5,0.5) {\huge$H$};

\draw (3.8,2) -- (3.8,-1.5);

\draw (-4,-2.75) -- (-4,-2.95);
\draw (-4,-2.85) -- (-0.85,-2.85);

\draw (5.05,-2.85) -- (5.45,-2.85);
\draw (5.45,-2.75) -- (5.45,-2.95);
\node at (0.05,-2.85) {$(1-\alpha) n$};

\draw (1,-2.85) -- (3.75,-2.85);
\draw (3.75,-2.75) -- (3.75,-2.95);
\draw (3.85,-2.75) -- (3.85,-2.95);
\draw (3.85,-2.85) -- (4.25,-2.85);
\node at (4.65,-2.85) {$\alpha n$};
\draw[thick]  (v1) rectangle (3.8,-1.5);
\end{tikzpicture}}
		\caption{Parity-check matrix where $\alpha n$ zero positions of $\vec e$ are guessed. Striped region of $\vec e$ indicates parts containing weight, crosshatched columns of $H'$ do not affect $\vec s$. Framed parts are used as input to the quantum algorithm.}
		\label{fig:first-tradeoff}
	\end{figure}
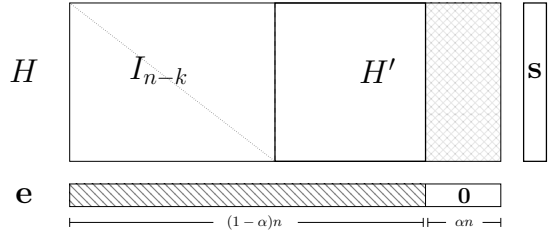
	\subsection{A hybrid version of Prange}
	Our first trade-off is a hybrid version of Prange's original algorithm. In his original algorithm $k$ zero positions of $\vec e$ are guessed and then the linear system corresponding to the non-zero positions is solved in polynomial time. In our hybrid version the classical part consists in guessing $\alpha n \leq k$ zero coordinates of $\vec e$, which allows to shorten the code and, hence, reduce the problem to an $[(1-\alpha) n,k-\alpha n]$ code, while the error weight remains unchanged (compare to \cref{fig:first-tradeoff}). This reduced instance is then solved with our previously constructed quantum circuit. Should the quantum computation not result in an actual solution, the initial guess of zero coordinates was incorrect and we proceed with a new guess.

	\cref{alg:hybrid-prange} gives a pseudocode description of our hybrid Prange variant.
	\begin{algorithm}
	\begin{algorithmic}[1]
		\Require{parity-check matrix $H\in\mathbb{F}_2^{(n-k)\times n}$, syndrome $\vec s \in \mathbb{F}_2^{n-k}$, weight $\omega\in[n]$, qubit reduction factor $\delta\in \llbracket0,1\rrbracket$}
		\Ensure{error vector $\vec e$ with $\wt(\vec e)=\omega$ satisfying $H\vec e=\vec s$ }
	
	\State $\alpha:= (1-\delta)\frac{k}{n}$
	\Repeat
	\State choose random permutation matrix $P\in\mathbb{F}_2^{n\times n}$ and set $\tilde{H}\leftarrow HP$
	\State solve instance $(\tilde{H}_{[(1-\alpha)n]},\vec s,\omega)$ via quantum algorithm returning $\vec e_1 \in \mathbb{F}_2^{(1-\alpha)n}$ \label{line:hp-quantum}
	\State $\vec e \leftarrow P(\vec e_1,0^{\alpha n})$ 
	\Until{$H\vec e = \vec s$}
	\State\Return $\vec e$
	\end{algorithmic}
	\caption{\HybridPrange}
	\label{alg:hybrid-prange}
    
	\end{algorithm}
	
	\begin{theorem}[Hybrid Prange]
	Let $n\in\mathbb{N}$, $\omega=\tau n$ and $k=Rn$ for $\tau,R\in \llbracket0,1\rrbracket$ and let $T_\mathrm{C}$ be as defined in \cref{eqn:classical-complexity}. Then for any qubit reduction factor $\delta \in \llbracket0,1\rrbracket$ \cref{alg:hybrid-prange} solves the $\SDP{n}{k}{\omega}$ problem in time $(T_\mathrm{C})^{t(\delta)}$ using $\delta (1-R)Rn^2$ qubits for the matrix representation, where 
	\[
	t(\delta)=1-\frac{\frac{1}{2}\left((1-\alpha)\Hfv{\frac{\tau}{1-\alpha}}-(1-R)\Hfv{\frac{\tau}{1-R}}\right)}{\Hf(\tau)-(1-R)\Hfv{\frac{\tau}{1-R}}},
	\label{thm:first-tradeoff}
	\]
	for $\alpha = (1-\delta)R$.
	\end{theorem}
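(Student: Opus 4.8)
The plan is to split the argument into three independent pieces --- the qubit count, correctness together with termination, and the running-time exponent --- and then to read off the closed form for $t(\delta)$ by converting binomial coefficients into binary entropies via \cref{eqn:stirling}. Throughout I set $\alpha:=(1-\delta)R$ and write $f(\beta):=\beta\,\Hfv{\tfrac{\tau}{\beta}}$ for brevity, so that the claimed exponent reads $t(\delta)=1-\tfrac{\frac12(f(1-\alpha)-f(1-R))}{\Hf(\tau)-f(1-R)}$.

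First I would establish the qubit count. Line~\ref{line:hp-quantum} of \cref{alg:hybrid-prange} calls the quantum solver on $(\tilde H_{[(1-\alpha)n]},\vec s,\omega)$, which is an instance of $\SDP{(1-\alpha)n}{k-\alpha n}{\omega}$: the shortened code has length $n':=(1-\alpha)n$ and dimension $k':=k-\alpha n$, hence redundancy $n'-k'=n-k$, an identity I will lean on repeatedly. Running the width-optimized circuit of \cref{eqn:reduced-qubits} on this instance uses $(n'-k')(k'+1)+\bigO(n)=(n-k)(k-\alpha n)+\bigO(n)$ qubits for the matrix register; substituting $n-k=(1-R)n$ and $k-\alpha n=Rn-(1-\delta)Rn=\delta Rn$ gives $\delta(1-R)Rn^2$ up to lower-order terms. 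One also checks $0\le\alpha\le R$, so $0\le k'\le n'$ and the shortened instance is well-formed.

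Next I would argue correctness and termination. Any $\vec e_1$ returned in line~\ref{line:hp-quantum} has weight $\omega$ and satisfies $\tilde H_{[(1-\alpha)n]}\vec e_1=\vec s$, so $\vec e=P(\vec e_1,0^{\alpha n})$ has $\wt(\vec e)=\omega$ and $H\vec e=HP(\vec e_1,0^{\alpha n})=\tilde H_{[(1-\alpha)n]}\vec e_1=\vec s$; hence whenever the loop condition $H\vec e=\vec s$ passes, the output is a genuine solution. For termination, a uniformly random $P$ sends all the weight of the (assumed unique, cf. \cref{sec:classical-isd}) solution into the first $(1-\alpha)n$ coordinates with probability $p:=\binom{(1-\alpha)n}{\omega}/\binom{n}{\omega}>0$; conditioned on this event the shortened instance has a weight-$\omega$ solution, and amplitude amplification (\cref{rem:isd-quant}, with the standard handling of an unknown number of solutions) finds it with probability $\Theta(1)$. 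So each iteration succeeds with probability $\Omega(p)$, the loop halts almost surely, and the expected number of iterations is $\bigO(p^{-1})$.

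Finally I would assemble the running time. One quantum call costs $\bigO(\sqrt{(q')^{-1}}\,D')$ by \cref{eqn:depth-full} with $D'=\bigO((n')^3\log n')=\mathrm{poly}(n)$ and, by \cref{eqn:prob-isd} at the shortened parameters, $(q')^{-1}=\binom{n'}{\omega}/\binom{n'-k'}{\omega}=\binom{(1-\alpha)n}{\omega}/\binom{(1-R)n}{\omega}$. Multiplying by the $\bigO(p^{-1})$ iterations and absorbing the $\mathrm{poly}(n)$ classical overhead per iteration (permuting, reducing to systematic form, the syndrome check), the expected total time is, up to polynomial factors,
\[
T=\frac{\binom{n}{\omega}}{\binom{(1-\alpha)n}{\omega}}\cdot\left(\frac{\binom{(1-\alpha)n}{\omega}}{\binom{(1-R)n}{\omega}}\right)^{1/2}.
\]
Applying \cref{eqn:stirling} with $\omega=\tau n$ and collapsing the terms $-f(1-\alpha)+\tfrac12 f(1-\alpha)$ gives $\log T=n\big(\Hf(\tau)-\tfrac12 f(1-\alpha)-\tfrac12 f(1-R)\big)+o(n)$, while $\log T_\mathrm{C}=n\big(\Hf(\tau)-f(1-R)\big)+o(n)$ by \cref{eqn:classical-complexity}; since $T_\mathrm{C}$ is exponential in $n$ this yields
\[
\frac{\log T}{\log T_\mathrm{C}}=1-\frac{\tfrac12\big(f(1-\alpha)-f(1-R)\big)}{\Hf(\tau)-f(1-R)}+o(1),
\]
which is exactly $t(\delta)$ after re-expanding $f$, so $T=(T_\mathrm{C})^{t(\delta)+o(1)}$. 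The main thing to be careful about will be the shortened-instance bookkeeping that keeps the redundancy equal to $n-k$ (this is precisely what makes the denominator of $t(\delta)$ coincide with the exponent of $T_\mathrm{C}$) and the one-line algebraic collapse of the $f(1-\alpha)$ terms; no conceptually hard step is involved.
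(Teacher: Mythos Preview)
Your proof is correct and follows essentially the same route as the paper: both reduce to a shortened $\SDP{(1-\alpha)n}{(R-\alpha)n}{\omega}$ instance, compute the classical permutation success probability $\binom{(1-\alpha)n}{\omega}/\binom{n}{\omega}$, multiply by the quantum solver's cost $\tmo{\sqrt{\binom{(1-\alpha)n}{\omega}/\binom{(1-R)n}{\omega}}}$, and then convert to entropies via \cref{eqn:stirling}. The only cosmetic difference is that the paper first notes $T=T_\mathrm{C}/T_\mathrm{Q}$ and hence $t(\delta)=1-\log T_\mathrm{Q}/\log T_\mathrm{C}$ before plugging in, whereas you compute $\log T/\log T_\mathrm{C}$ directly and then rearrange into the $1-\cdots$ form; the algebra and the qubit-count verification via $n'-k'=n-k$ are identical.
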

	\begin{proof}
	Assume that the permutation $P$ distributes the error such that 
	\begin{align}
	    P^{-1}\vec e = (\vec e_1,0^{\alpha n}),
	    \label{eqn:hp-desired-dist}
	\end{align} for $\alpha$ as defined in \cref{alg:hybrid-prange}. Then it follows, that $\vec e_1$ is a solution to syndrome decoding instance $((HP)_{[(1-\alpha)n]},\vec s,\omega)$. By the correctness of our quantum circuit the solution $\vec e_1$ is returned in line \ref{line:hp-quantum} and finally $\vec e=P(\vec e_1,0^{\alpha n})$ is recovered.
	
	Next let us analyze the running time of the algorithm. The probability of a random permutation distributing the error weight as given in \cref{eqn:hp-desired-dist} is
	\[
	q_\mathrm{C}:=\mathrm{Pr}\left[P^{-1}\vec e=(\vec e_1,0^{\alpha n})\right]=\frac{\binom{(1-\alpha)n}{\omega}}{\binom{n}{\omega}}.
	\]
	Hence, we expect that after $q_\mathrm{C}^{-1}$ random permutations one of them induces the desired weight-distribution. The asymptotic time complexity for the execution of the quantum circuit to solve the corresponding $\SDP{(1-\alpha)n}{(R-\alpha)n}{\omega}$ problem can be derived from \cref{eqn:depth-full} as
	\[
	    T_\mathrm{Q} = \tmo{\sqrt{\frac{\binom{(1-\alpha)n}{\omega}}{\binom{(1-R)n}{\omega}}}}.
	\]
	
	Since for each classically chosen permutation we need to execute our quantum circuit the total running time becomes
	\[
	 T=q_\mathrm{C}^{-1}\cdot T_\mathrm{Q}=\tmo{\frac{\binom{n}{\omega}}{\sqrt{\binom{(1-\alpha) n}{\omega}\binom{(1-R)n}{\omega}}}}.
	\]
	
	Now let us determine $t(\delta):=\frac{\log T}{\log T_\mathrm{C}}$. First observe that $T=\frac{T_\mathrm{C}}{T_\mathrm{Q}}$, which can be rewritten as
	\begin{alignat*}{2}
	&&\log T_\mathrm{C}-\log T_\mathrm{Q} &= \log T\\
	\Leftrightarrow~&& 1-\frac{\log T_\mathrm{Q}}{\log T_\mathrm{C}}&=\frac{\log T}{\log T_\mathrm{C}}=:t(\delta).
	\end{alignat*}
	An approximation of $T_\mathrm{Q}$ and $T_\mathrm{C}$ via the approximation for binomial coefficients given in \cref{eqn:stirling} together with $\omega:=\tau n$ and $k:=Rn$ then yields
	\[
	t(\delta) = 1-\frac{\frac{1}{2}\left((1-\alpha)\Hfv{\frac{\tau}{1-\alpha}}-(1-R)\Hfv{\frac{\tau}{1-R}}\right)}{\Hf(\tau)-(1-R)\Hfv{\frac{\tau}{1-R}}},
	\]
	as claimed. Note that the input matrix of an $[(1-\alpha)n,(R-\alpha) n]$-code requires $(1-R)(R-\alpha) n^2$ qubits for the matrix representation (compare to \cref{eqn:reduced-qubits}). Hence, by setting $\alpha=(1-\delta)R$ we obtain a qubit reduction by
	\[
	    \frac{(1-R)(R-\alpha) n^2}{(1-R)Rn^2}=\frac{R-(1-\delta)R}{R}=\delta.
	    \qedeq
	\]
	\end{proof}

	Note that in the case of a sublinear error-weight, which is e.g. the case for the McEliece, BIKE and HQC crypto systems, $T_\mathrm{C}$ can be expressed as
	\begin{align}
	   T_\mathrm{C}=\tmo{\frac{\binom{n}{\omega}}{\binom{(1-R)n}{\omega}}}=\tmo{(1-R)^{-\omega}},
	   \label{eqn:sublinear-error}
	\end{align}
	as shown in \cite{torres2016analysis}.
	
	This allows us to simplify the statement of \cref{thm:first-tradeoff} in the following corollary.
	
	\begin{restatable}[Hybrid Prange for sublinear error weight]{corollary}{corhybridprange}
	Let all parameters be as in \cref{thm:first-tradeoff}. For $\tau=o(1)$, we have
	\[
	t(\delta)=\frac{1}{2}\cdot\left(1+\frac{\log(1-(1-\delta)R)}{\log(1-R)}\right).
	\]
	\label{cor:hybrid-prange}
	\end{restatable}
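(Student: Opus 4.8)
The plan is to specialize the general formula for $t(\delta)$ from \cref{thm:first-tradeoff} to the regime $\tau = o(1)$, using the sublinear-error simplification $T_\mathrm{C} = \tmo{(1-R)^{-\omega}}$ from \cref{eqn:sublinear-error}. The starting point is the identity $t(\delta) = 1 - \frac{\log T_\mathrm{Q}}{\log T_\mathrm{C}}$ established inside the proof of \cref{thm:first-tradeoff}, together with the expression $T = q_\mathrm{C}^{-1}\cdot T_\mathrm{Q} = \tmo{\binom{n}{\omega}\big/\sqrt{\binom{(1-\alpha)n}{\omega}\binom{(1-R)n}{\omega}}}$.

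First I would rewrite each of the three binomial ratios appearing in $T$ in the sublinear regime. By the same argument used to derive \cref{eqn:sublinear-error} (essentially $\binom{cn}{\omega}/\binom{n}{\omega} \to c^{\omega}$ when $\omega = o(n)$, since $\binom{cn}{\omega}/\binom{n}{\omega} = \prod_{i=0}^{\omega-1}\frac{cn-i}{n-i} = c^{\omega}(1+o(1))^{\omega}$ on the log scale), one gets $\binom{(1-\alpha)n}{\omega} = \tmo{(1-\alpha)^{\omega}\binom{n}{\omega}}$ and $\binom{(1-R)n}{\omega} = \tmo{(1-R)^{\omega}\binom{n}{\omega}}$. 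Substituting, $T = \tmo{\binom{n}{\omega}\big/\big((1-\alpha)^{\omega/2}(1-R)^{\omega/2}\binom{n}{\omega}\big)} = \tmo{(1-\alpha)^{-\omega/2}(1-R)^{-\omega/2}}$, so $\log T = -\tfrac{\omega}{2}\big(\log(1-\alpha) + \log(1-R)\big) + o(\omega)$.

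Next I would divide by $\log T_\mathrm{C} = -\omega\log(1-R) + o(\omega)$ from \cref{eqn:sublinear-error}, obtaining
\[
t(\delta) = \frac{\log T}{\log T_\mathrm{C}} = \frac{-\tfrac{\omega}{2}\big(\log(1-\alpha)+\log(1-R)\big)}{-\omega\log(1-R)} = \frac{1}{2}\left(1 + \frac{\log(1-\alpha)}{\log(1-R)}\right).
\]
Finally, substituting $\alpha = (1-\delta)R$ gives $1 - \alpha = 1 - (1-\delta)R$, yielding exactly the claimed expression $t(\delta) = \tfrac{1}{2}\big(1 + \tfrac{\log(1-(1-\delta)R)}{\log(1-R)}\big)$.

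The only real subtlety — and the step I would treat most carefully — is justifying that the suppressed error terms are genuinely lower order, i.e. that the $o(\omega)$ corrections in the exponents of the three binomials vanish after dividing by $\log T_\mathrm{C} = \Theta(\omega)$. This requires that $\omega \to \infty$ (so $\tau = o(1)$ but not identically zero) and that $R$ is a constant bounded away from $0$ and $1$, both of which are guaranteed by the hypotheses ($k = Rn$ with constant rate, and the implicit assumption $\tau = \omega(1/n)$ so that the instance is nontrivial). Everything else is routine algebraic rearrangement of the formula already proven in \cref{thm:first-tradeoff}, so the corollary follows directly.
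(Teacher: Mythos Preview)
Your proposal is correct and follows essentially the same approach as the paper: approximate the binomial ratios in the sublinear-error regime via $\binom{cn}{\omega}/\binom{n}{\omega}\approx c^{\omega}$, then take the ratio of log-complexities and substitute $\alpha=(1-\delta)R$. The only cosmetic difference is that the paper works directly with $T_\mathrm{Q}=\tmo{\big(\tfrac{1-\alpha}{1-R}\big)^{\omega/2}}$ and the identity $t(\delta)=1-\log T_\mathrm{Q}/\log T_\mathrm{C}$, whereas you approximate the full $T$ and use $t(\delta)=\log T/\log T_\mathrm{C}$; these are the same computation up to trivial algebra.
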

	\begin{proof}
	First we approximate $T_\mathrm{Q}$ similar to $T_\mathrm{C}$ in \cref{eqn:sublinear-error} as
	\[
	 T_\mathrm{Q}=\tmo{\sqrt{\frac{\binom{(1-\alpha)n}{\omega}}{\binom{(1-R)n}{\omega}}}}=\tmo{\left(\frac{1-\alpha}{1-R}\right)^{\frac{\omega}{2}}}.
	\]
	Now we can derive the statement of the corollary as
	\begin{align*}
	t(\delta)&=1-\frac{\log T_\mathrm{Q}}{\log T_\mathrm{C}}= 1- \frac{\frac{\omega}{2}\left(\log(1-\alpha)-\log(1-R)\right)}{-\omega\log(1-R)}\\
	&= \frac{1}{2}\cdot\left(1+\frac{\log(1-(1-\delta)R)}{\log(1-R)}\right).
	\qedeq
	\end{align*}
	\end{proof}

	\cref{fig:first-tradeoff-plots} visualizes the relation between the qubit-reduction factor and the speedup for different choices of the code- and error-rate. We compare the full distance decoding setting with worst-case rate $R=0.5$ and, hence, $\tau=\Hf^{-1}(R)\approx0.11$ and the half distance case with $\tau=\frac{\Hf^{-1}(R)}{2}$ to the code parameters of the McEliece scheme, which are $R=0.8$ and $\tau=o(1)$, and the parameters of the BIKE and HQC schemes, which are specified as $R=0.5$ and $\tau=o(1)$. Additionally, we give comparisons to higher code- and error-rates. It can be observed that the best results are obtained for high rates, where the code-rate is the more significant factor, which lies in favour to mounting an attack against codes using McEliece parameters. Note that especially for a rate close to $0.5$ the trade-off is very insensitive to changes in the error-rate, such that the behaviour for the settings of full and half distance as well as BIKE and HQC are almost identical, hence, we only included the full distance case for the sake of clarity.
	
	To give a concrete example, our \HybridPrange{} algorithm allows for a reduction of the necessary qubits by 80\% (corresponding to $\delta=0.2$), while still achieving a speedup of $t(\delta)\approx0.82$ in the McEliece setting.
	
	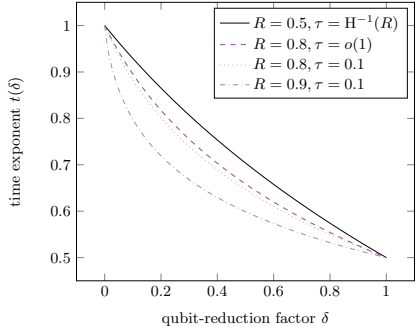
\begin{figure}[t]
	    \centering
    	\resizebox{0.45\textwidth}{!}{	\begin{tikzpicture}
	\begin{axis}[
		y tick label style={
			/pgf/number format/.cd,
			fixed,
			precision=3,
			/tikz/.cd
		},
		x tick label style={
			/pgf/number format/.cd,
			fixed,
			1000 sep={},
			precision=2,
			/tikz/.cd
		},
		ytick={0.5,0.6,0.7,0.8,0.9,1},
		xlabel={qubit-reduction factor $\delta$},
		ylabel={time exponent $t(\delta)$},
		legend cell align={left},
		legend entries={{$R=0.5, \tau=\mathrm{H}^{-1}(R)$},{$R=0.8, \tau=o(1)$},{$R=0.8, \tau=0.1$},{$R=0.9, \tau=0.1$}}
		]

		\pgfplotstableread{plots/first_r=05_full_distance.txt}
		\cTradeOff
		\addplot[color=black] table[x = X,y=Y] from \cTradeOff ;
		
		\pgfplotstableread{plots/first_r=08_small_w.txt}
		\cTradeOff
		\addplot[color=Purple,dashed] table[x = X,y=Y] from \cTradeOff ;
		
		\pgfplotstableread{plots/first_r=08_w=01.txt}
		\cTradeOff
		\addplot[color=Gray,dotted] table[x = X,y=Y] from \cTradeOff ;
		
		\pgfplotstableread{plots/first_r=09_w=01.txt}
		\cTradeOff
		\addplot[color=Gray,dashdotted] table[x = X,y=Y] from \cTradeOff ;
	\end{axis}	
	
\end{tikzpicture}}
    	\caption{Time exponent (y-axis) achieved by \cref{thm:first-tradeoff} plotted as a function of the qubit-reduction factor $\delta$ (x-axis).} 
    	\label{fig:first-tradeoff-plots}
    \end{figure}

	\subsection{Puncturing the code}
	While our \HybridPrange{} decreases the amount of necessary qubits by shortening the code, our second trade-off instead aims at puncturing the code. In a nutshell we consider only $(1-\beta)n-k$ parity-check equations, rather than all $n-k$, which is equivalent to omitting $\beta n$ rows of the parity-check matrix. The subsequently applied quantum circuit, hence, needs fewer qubits to represent matrix and syndrome. The advantage over \HybridPrange{} partly comes form the fact that each row saves $n$ instead of only $n-k$ bits. Also the generated classical overhead is significantly smaller. This variant has similarities with the  Canteaut-Chabaud improvement \cite{canteaut1998new}. Here only a certain amount of columns (originally only one) of the identity part are exchanged in each iteration rather than drawing a completely new permutation. In our case we fix $\beta n$ columns of the permutation classically and then search for the remaining $n-k-\beta n$ quantumly. In addition we introduce a different weight distribution on the fixed columns, which does not yield improvements in a purely classical setting.

    We again start with a parity-check matrix $H$ in systematic form. Now consider the projection of $H$ onto its first $n-k-\beta n$ rows, we call the resulting matrix $\tilde{H}$. Clearly, a solution $\vec e$ to the instance $(H,\vec s,\omega)$ is still a solution to the instance $(\tilde{H},\vec s_{[n-k-\beta n]}, \omega)$. Moreover, the matrix $\tilde{H}$ includes $\beta n$ zero columns, which can safely be removed (compare to \cref{fig:second-tradeoff}). This results in a matrix $\tilde{H}'=(I_{n-k-\beta n}\mid H')\in\F_2^{(n-k-\beta n)\times (1-\beta)n}$ corresponding to an $[(1-\beta) n,k]$ code. Still, by removing the corresponding coordinates from $\vec e$ we obtain a solution $\vec e'$ to the instance $(\tilde{H}',\vec s_{[n-k-\beta n]}, \omega-p)$, where $p:=\wt(\vec e_{[n-k-\beta n +1, n-k]})$ is the weight of coordinates removed from $\vec e$.
    Eventually, once $\vec e'$ is recovered we can obtain $\vec e$ in polynomial time by solving the respective linear system. 
    
    A crucial observation is that disregarding $\beta n$ parity-check equations could lead to the existence of multiple solutions to the reduced instance, i.e. multiple $\vec e'$ satisfying $\tilde{H}'\vec e' = \vec s_{[n-k-\beta n]}$ but yielding an $\vec e$ with $\wt(\vec e)>\omega$. We can control this amount of solutions by shifting more weight onto the removed coordinates. Also our algorithm compensates for multiple solutions by recovering all solutions to the reduced instance by repeated executions of the quantum circuit. A pseudocode description of this trade-off is given in \cref{alg:punctured-hybrid}.

	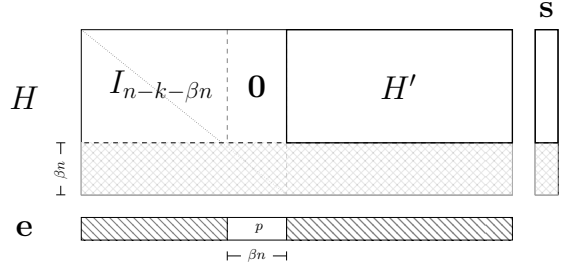
\begin{figure}[t]
	\centering
	\scalebox{0.6}{\begin{tikzpicture}

\node at (-5.25,-2.4) {\huge$\mathbf{e}$};
\node at (6.2,2.45) {\huge$\mathbf{s}$};

\draw[opacity=0.5,densely dotted] (-4,2) -- (0.5,-1.65);
\node at (-2.25,0.75) {\huge$I_{n-k-\beta n}$};
\draw[opacity=0.5,dashed] (0.5,2) node (v1) {} -- (0.5,-1.65);

\node at (2.95,0.75) {\huge$H'$};
\fill[pattern=north west lines]  (5.45,-2.15) rectangle (0.5,-2.65);
\fill[color=white,opacity=0.5]  (5.45,-2.15) rectangle (0.5,-2.65);
\draw (0.5,-2.15) -- (0.5,-2.65);

\fill[pattern=north west lines]  (5.45,-2.15) rectangle (0.5,-2.65);
\fill[color=white,opacity=0.5]  (5.45,-2.15) rectangle (0.5,-2.65);
\draw (0.5,-2.15) -- (0.5,-2.65);

\fill[pattern=north west lines]  (-4,-2.15) rectangle (-0.8,-2.65);
\fill[color=white,opacity=0.5]  (-4,-2.15) rectangle (-0.8,-2.65);
\draw (-0.8,-2.15) -- (-0.8,-2.65);

\draw  (-4,2) rectangle (5.45,-1.65);
\draw  (-4,-2.15) rectangle (5.45,-2.65);
\draw  (5.95,2) node (v2) {} rectangle (6.45,-1.65);
\node at (-5.25,0.5) {\huge$H$};

\fill[opacity=0.5,pattern=crosshatch]  (5.95,-0.5)  rectangle (6.45,-1.65);
\fill[opacity=0.65,color=white]  (5.95,-0.5) rectangle (6.45,-1.65);
\draw[dashed] (6.45,-0.5) -- (5.95,-0.5);

\fill[opacity=0.5,pattern=crosshatch]  (-4,-0.5)  rectangle (5.45,-1.65);
\fill[opacity=0.65,color=white]  (-4,-0.5) rectangle (5.45,-1.65);
\draw[dashed] (5.45,-0.5) -- (-4,-0.5);
\draw (-0.8,-2.9) -- (-0.8,-3.1);
\draw (-0.8,-3) -- (-0.5,-3);

\draw (-4.45,-0.75) -- (-4.45,-0.5);
\draw (-4.55,-0.5) -- (-4.35,-0.5);
\draw (-4.45,-1.45) -- (-4.45,-1.65);
\draw (-4.35,-1.65) -- (-4.55,-1.65);

\node at (-0.15,-3) {$\beta n$};

\draw (0.2,-3) -- (0.5,-3);
\draw (0.5,-2.9) -- (0.5,-3.1);

\node[rotate=90] at (-4.45,-1.1) {$\beta n$};
\draw[opacity=0.5,dashed] (-0.8,2) -- (-0.8,-0.5);
\node at (-0.15,0.75) {\huge$\mathbf{0}$};
\node at (-0.1,-2.4) {$p$};
\draw[fill opacity=0.04,thick]  (v1) rectangle (5.45,-0.5);
\draw[thick]  (v2) rectangle (6.45,-0.5);
\end{tikzpicture}}
	\caption{Parity-check matrix where $\beta n$ rows are omitted and $\vec e$ contains weight $p$ on $\beta n$ coordinates. Framed parts are used as input to the quantum algorithm.}
	\label{fig:second-tradeoff}
	\end{figure}

	In the following theorem we first state the time complexity of \cref{alg:punctured-hybrid} in dependence on the qubit reduction factor $\delta$. After this we derive the speedup $t(\delta)$ in a separate corollary.

	\begin{algorithm}[ht]
	\begin{algorithmic}[1]
		\Require{parity-check matrix $H\in\mathbb{F}_2^{(n-k)\times n}$, syndrome $\vec s \in \mathbb{F}_2^{n-k}$, weight $\omega\in[n]$, qubit reduction factor $\delta\in \llbracket0,1\rrbracket$}
		\Ensure{error vector $\vec e$ with $\wt(\vec e)=\omega$ satisfying $H\vec e=\vec s$ }
	
	\State choose $p$ accordingly
	\State $\beta:= (1-\delta)(1-\frac{k}{n})$, $S:=\frac{\binom{(1-\beta)n}{\omega-p}}{2^{(1-\beta)n-k}}$
	\Repeat
	\State choose random permutation matrix $P\in\mathbb{F}_2^{n\times n}$ and set $\tilde{H}\leftarrow HP$
	\State transform $\tilde{H}$ to systematic form, $\tilde{H}=\begin{pmatrix}I_{n-k-\beta n} & \vec 0 & H_1'\\
	\vec 0 & I_{\beta n}&H_2'\end{pmatrix}$ with syndrome $\tilde{\vec s}$ \label{line:RR-systematic}
	\State $\tilde{H}'\leftarrow (I_{n-k-\beta n}\mid H_1')$, $\vec s'\leftarrow \tilde{\vec s}_{[(1-\beta)n-k]}$
	\For{$i=1$ \textbf{to} $\mathrm{poly}(n)\cdot S$}
	\State solve instance $(\tilde{H}',\vec s',\omega -p )$ via quantum algorithm returning $\vec e' \in \mathbb{F}_2^{(1-\beta)n}$ \label{line:RR-quantum}
	\State $\vec e''\leftarrow H_2'\vec e_{[n-k-\beta n+1,(1-\beta)n]}'+\tilde{\vec s}_{[n-k-\beta n +1,n-k]}$
	\If {$\wt(\vec e'')\leq p$}
	\State $\vec e\leftarrow P(\vec e_{[n-k-\beta n]}',\vec e'',\vec e_{[n-k-\beta n+1,(1-\beta)n]}')$
	\State \bf{break}
	\EndIf
	\EndFor
	\Until{$H\vec e = \vec s$}
	\State \Return $\vec e$
	\end{algorithmic}
	\caption{Punctured Hybrid}
	\label{alg:punctured-hybrid}
    
	\end{algorithm}
	\begin{theorem}[Punctured Hybrid]
	Let $n\in\mathbb{N}$, $\omega\in [n]$ and $k=Rn$ for $R\in \llbracket0,1\rrbracket$. Then for any qubit reduction factor $\delta \in \llbracket0,1\rrbracket$ \cref{alg:punctured-hybrid} solves the $\SDP{n}{k}{\omega}$ problem in expected time $T_{\textsc{PH}}$ using $\delta (1-R)Rn^2$ qubits for the matrix representation, where 
	\[
	    T_{\textsc{PH}}=\tmo{\frac{\binom{n}{\omega}}{\sqrt{\binom{(1-\beta)n}{\omega-p}\binom{(1-\beta-R)n}{\omega-p}}\binom{\beta n}{p}}\cdot\max\left(1,\sqrt{\binom{(1-\beta)n}{\omega-p}\cdot2^{-(1-\beta-R)n}}\right)}
	\]
	with $\beta = (1-\delta)(1-R)$ and $p\in[\min(\omega,\beta n)]$. 
	\label{thm:punctured-hybrid}

	\end{theorem}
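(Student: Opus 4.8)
The plan is to split the argument into correctness, the running‑time bound, and the qubit count, mirroring the proof of \cref{thm:first-tradeoff}.

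\textbf{Correctness.} I would track the (assumed unique) solution $\vec e$ under the permutation $P$ and the column permutation induced by the transformation to systematic form in line~\ref{line:RR-systematic}. Writing $P^{-1}\vec e=(\vec a,\vec b,\vec c)$ according to the three column blocks of $\tilde H$, the key event is that this combined permutation places weight exactly $p=\wt(\vec b)$ on the $\beta n$ coordinates that are punctured away. Conditioned on this event, $(\vec a,\vec c)$ is a weight‑$(\omega-p)$ solution of the reduced instance $(\tilde H',\vec s',\omega-p)$, hence by correctness of the quantum ISD circuit it lies among the weight‑$(\omega-p)$ solutions explored by the inner loop. For any returned $\vec e'=(\vec e_1',\vec e_2')$, the vector $\vec e''=H_2'\vec e_2'+\tilde{\vec s}_{[n-k-\beta n+1,n-k]}$ is precisely the one forced on the punctured coordinates by the bottom block of the systematic form, so a short linear‑algebra computation gives $H\cdot P(\vec e_1',\vec e'',\vec e_2')=\vec s$; whenever additionally $\wt(\vec e'')\le p$ the reconstructed vector has weight at most $\omega$, and under uniqueness it then equals $\vec e$. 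This justifies both the inner filter and the outer $H\vec e=\vec s$ check, with the true solution always passing the filter.

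\textbf{Running time.} The probability that a random permutation induces the desired split is $q_{\mathrm{RR}}:=\binom{\beta n}{p}\binom{(1-\beta)n}{\omega-p}/\binom{n}{\omega}$, so in expectation $q_{\mathrm{RR}}^{-1}$ outer iterations suffice, each costing a polynomial amount for the systematic‑form step plus the inner loop. The reduced instance is an $[(1-\beta)n,k]$ instance with $(1-\beta-R)n$ parity checks, so by \cref{eqn:prob-isd} a random information set of the reduced code captures a fixed solution with probability $q_{\mathrm{Q}}=\binom{(1-\beta-R)n}{\omega-p}/\binom{(1-\beta)n}{\omega-p}$; since this instance has $\Theta(S)$ weight‑$(\omega-p)$ solutions with high probability, \cref{rem:isd-quant} together with \cref{eqn:depth-full} gives cost $\tmo{\sqrt{(\max(1,S)\,q_{\mathrm{Q}})^{-1}}}$ per call of line~\ref{line:RR-quantum}. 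Repeating the inner loop $\mathrm{poly}(n)\cdot S$ times to recover a filter‑passing solution with high probability therefore costs $\tmo{\max(1,S)\cdot\sqrt{(\max(1,S)q_{\mathrm{Q}})^{-1}}}=\tmo{\max(1,\sqrt S)\cdot\sqrt{q_{\mathrm{Q}}^{-1}}}$ quantum time per outer iteration. Multiplying by $q_{\mathrm{RR}}^{-1}$, substituting $q_{\mathrm{Q}}$ and $q_{\mathrm{RR}}$, using $S=\binom{(1-\beta)n}{\omega-p}2^{-(1-\beta-R)n}$, and absorbing polynomial factors into $\tmo{\cdot}$ then yields the stated expression for $T_{\textsc{PH}}$.

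\textbf{Qubit count and main obstacle.} The quantum circuit runs on the parity‑check matrix of an $[(1-\beta)n,k]$ code in systematic form, so by \cref{eqn:reduced-qubits} its matrix register uses $((1-\beta)n-k)\cdot k=(1-\beta-R)Rn^2$ qubits; plugging $\beta=(1-\delta)(1-R)$ gives $1-\beta-R=\delta(1-R)$, hence $\delta(1-R)Rn^2$ qubits. I expect the delicate part to be the bookkeeping around the multiple solutions of the punctured instance: arguing that the actual number of weight‑$(\omega-p)$ solutions concentrates around $S$, that amplitude amplification returns a near‑uniform one of them so that $\mathrm{poly}(n)\cdot S$ repetitions of line~\ref{line:RR-quantum} suffice, and — reflected in the $\max(1,\sqrt S)$ factor and in the freedom to \emph{choose $p$ accordingly} — picking $p$ so that no more than $\tmo{\sqrt{q_{\mathrm{Q}}^{-1}}}$ quantum time is spent per recovered solution.
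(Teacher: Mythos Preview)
Your proposal is correct and follows essentially the same route as the paper: the same three-block decomposition $P^{-1}\vec e=(\vec e_1,\vec e_2,\vec e_3)$ conditioned on $\wt(\vec e_2)=p$, the same outer success probability $q_{\mathrm{RR}}=q_{\mathrm C}$, the same use of \cref{rem:isd-quant} to cost each quantum call as $\tmo{\sqrt{(\max(1,S)\,q_{\mathrm Q})^{-1}}}$, and the same multiplication by $\max(1,S)$ inner repetitions to obtain $T_{\textsc{PH}}=q_{\mathrm C}^{-1}\cdot\max(1,S)\cdot T_{\mathrm Q}$. The paper names the inner argument explicitly as a coupon-collector bound (to recover \emph{all} $S$ solutions, hence the true one, in $\mathrm{poly}(n)\cdot S$ calls) and otherwise treats the concentration of the solution count heuristically, exactly the point you flag as the delicate part; your qubit derivation via \cref{eqn:reduced-qubits} also matches.
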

	\begin{proof}
	Assume that the permutation distributes the error weight, such that for $P^{-1}\vec e = (\vec e_1,\vec e_2,\vec{e_3})\in \F_2^{(1-\beta-R)n}\times\F_2^{\beta n}\times\F_2^{Rn}$ it holds $\wt(\vec e_2)=p$.
	Now consider the permuted parity-check matrix in systematic form $\tilde{H}$ as given in line \ref{line:RR-systematic} of \cref{alg:punctured-hybrid} with corresponding syndrome $\tilde{\vec s}$. We obtain
	\[
	\tilde{H}P^{-1}\vec e= (\vec e_1+H_1'\vec e_3,\vec e_2+H_2'\vec e_3)=\tilde{\vec s}.
	\]
	This implies that $(\vec e_1,\vec e_3)$ is a solution to the syndrome decoding instance $(\tilde{H}',\vec s',\omega -p)$ with $\tilde{H}'=(I_{(1-\beta-R)n}\mid H_1')$ and $\vec s'=\tilde{\vec s}_{[(1-\beta-R)n]}$.
	The solution is then recovered by the application of our quantum circuit in line \ref{line:RR-quantum}. Note that in expectation there exist
	\[
	S:=\binom{(1-\beta)n}{\omega-p}\cdot2^{-(1-\beta-R)n}
	\]
	solutions to our reduced instance. Since we apply our quantum circuit $\mathrm{poly}(n)\cdot S$ times and in each execution a random solution is returned, a standard coupon collector argument yields that we recover all $S$ solutions with high probability.
	Now, when $\vec e'=(\vec e_1,\vec e_3)$ is returned by the quantum circuit, we recover $\vec e_2=\tilde{\vec s}_{[(1-\beta-R)n+1,(1-R)n]}+H_2'\vec e_3$ and eventually return $\vec e=P(\vec e_1,\vec e_2,\vec e_3)$.
	
	Next let us consider the time complexity of the algorithm. Observe that the probability, that $\wt(\vec e_2)=p$ for a random permutation holds is
	\begin{align*}
	q_\mathrm{C}:=\p{\wt(\vec e_2)=p}=\frac{\binom{(1-\beta)n}{\omega-p}\binom{\beta n}{p}}{\binom{n}{\omega}}.
	\end{align*}
    Hence, after $q_\mathrm{C}^{-1}$ iterations we expect that there is at least one iteration where $\wt(\vec e_2)=p$. 
    In each iteration we apply our quantum circuit $\tmo{S}$ times to solve the reduced instance $(\tilde{H}',\vec s',\omega -p)$, corresponding to an  $[(1-\beta)n, (1-R)n]$-code. Since there exist $S$ solutions the expected time to retrieve one of them at random is 
    \[
    T_\mathrm{Q}=\tmo{\sqrt{\frac{\binom{(1-\beta)n}{\omega-p}}{\max(1,S)\cdot\binom{(1-\beta-R)n}{\omega -p}}}},
    \]
    according to \cref{rem:isd-quant}. The maximum follows since we know that there exists at least one solution. In summary the running time becomes $T_{\textsc{PH}}=q_\mathrm{C}^{-1}\cdot T_\mathrm{Q}\cdot \max(1,S)$,
    as stated in the theorem. 
    
    The required amount of qubits of the quantum circuit for solving the syndrome decoding problem related to an $[(1-\beta)n, (1-R)n]$-code are roughly $R(1-\beta-R)n^2$ (compare to \cref{eqn:reduced-qubits}). Thus, for $\beta:=(1-\delta)(1-R)$ this corresponds to a qubit reduction of
    \[
    \frac{R(1-\beta-R)}{R(1-R)}=\frac{1-R -(1-\delta)(1-R)}{1-R}=\delta.
    \qedeq
    \]
    
    \end{proof}
	
	\cref{thm:punctured-hybrid} allows to easily determine the corresponding speedup, whose exact formula we give in \cref{cor:punctured-hybrid} in \cref{app:punctured}. 
	
	In \cref{fig:punctured-hybrid-plots} we compare the behavior of our new trade-off to our previously obtained \HybridPrange. Recall that the performance of \HybridPrange{} is not very sensitive to changes in the weight. Thus, for settings with a rate of $R=0.5$ the dashed lines are almost on top of each other. The value $p$ of our new trade-off (\cref{thm:punctured-hybrid}) were optimized numerically. It can be observed, that our second trade-off outperforms the first one for all parameters. We observe the best behaviour for low coderates and small error weights, which correspond to the case, where the solution is very unique. In these cases our \PuncturedHybrid{} algorithm can disregard parity-check equations without introducing multiple solutions to the reduced instance. Hence, still a single execution of the quantum circuit suffices to recover the solution. Note that in the McEliece, BIKE and HQC setting the error weight is sublinear, which is in favour of our new trade-off. BIKE and HQC furthermore use a very small error weight of only $\mathcal{O}(\sqrt{n})$ and specify a rate of $R=0.5$, which results in a very unique solution. Consequently, in \cref{fig:punctured-hybrid-plots} it can be observed, that asymptotically for these settings the second trade-off improves drastically on \HybridPrange. 
	
	Note that our formulation of the speedup for \PuncturedHybrid{} in contrast to \HybridPrange{} (see \cref{cor:hybrid-prange}) still depends on the error-rate, not exactly allowing for $\omega = o(n)$. Thus, to obtain the asymptotic plot we compared the result of \cref{cor:hybrid-prange} to \cref{thm:punctured-hybrid} for McEliece $[6688, 5024, 128]$, BIKE $[81946, 40973, 264]$ and HQC $[115274, 57637, 262]$, which are the suggested parameters for 256-bit security from the corresponding NIST submission documentation \cite{chou2020classic,aragon2017bike,melchor2020hamming}. 
	
	To quantify the result of our new trade-off take e.g. the case of McEliece and a qubit reduction by 80\% ($\delta=0.2$), as before. Here we improve to a speedup of $t(\delta)\approx 0.74$, compared to $0.82$ for $\HybridPrange$.

	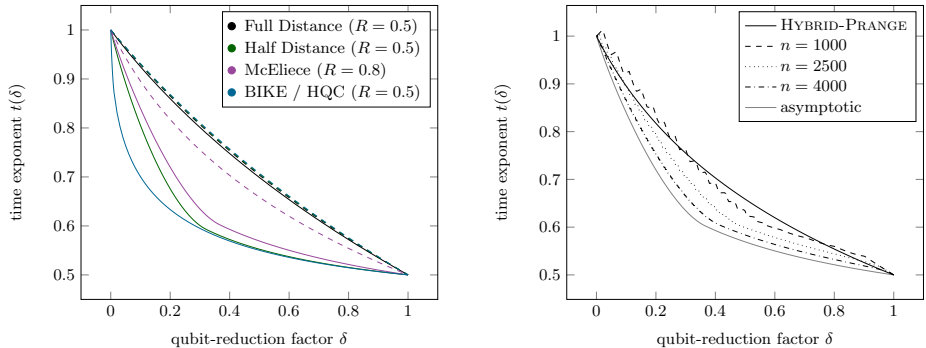
\begin{figure}[ht]
\centering
\begin{subfigure}[t]{.475\textwidth}
  \centering
	\resizebox{\textwidth}{!}{	\begin{tikzpicture}
	\begin{axis}[
		y tick label style={
			/pgf/number format/.cd,
			fixed,
			precision=3,
			/tikz/.cd
		},
		x tick label style={
			/pgf/number format/.cd,
			fixed,
			1000 sep={},
			precision=2,
			/tikz/.cd
		},
		ytick={0.5,0.6,0.7,0.8,0.9,1},
		xlabel={qubit-reduction factor $\delta$},
		ylabel={time exponent $t(\delta)$},
		legend cell align={left},
		legend entries={~Full Distance ($R=0.5$), ~Half Distance ($R=0.5$), ~McEliece ($R=0.8$),~BIKE / HQC ($R=0.5$)}
		]
		\addlegendimage{only marks,black}
		\addlegendimage{only marks,DarkGreen}
		\addlegendimage{only marks,Purple}
		\addlegendimage{only marks,MidnightBlue}
		
		\pgfplotstableread{plots/first_full_distance_asym.txt}
		\cTradeOff
		\addplot[color=black,dashed] table[x = X,y=Y] from \cTradeOff ;
		
		\pgfplotstableread{plots/second_full_distance_asym.txt}
		\cTradeOff
		\addplot[color=black] table[x = X,y=Y] from \cTradeOff ;
		
		\pgfplotstableread{plots/first_half_distance_asym.txt}
		\cTradeOff
		\addplot[color=DarkGreen,dashed] table[x = X,y=Y] from \cTradeOff ;
		
		\pgfplotstableread{plots/second_half_distance_asym.txt}
		\cTradeOff
		\addplot[color=DarkGreen] table[x = X,y=Y] from \cTradeOff ;
		
		\pgfplotstableread{plots/first_r=08_small_w.txt}
		\cTradeOff
		\addplot[color=Purple,dashed] table[x = X,y=Y] from \cTradeOff ;
		
		\pgfplotstableread{plots/second_mceliece_asym.txt}
		\cTradeOff
		\addplot[color=Purple] table[x = X,y=Y] from \cTradeOff ;
		
		\pgfplotstableread{plots/first_bike_asym.txt}
		\cTradeOff
		\addplot[color=MidnightBlue,dashed] table[x = X,y=Y] from \cTradeOff ;
		
		\pgfplotstableread{plots/second_bike_asym.txt}
		\cTradeOff
		\addplot[color=MidnightBlue] table[x = X,y=Y] from \cTradeOff ;
	\end{axis}	
\end{tikzpicture}}
	\caption{Asymptotic time exponents. New \cref{thm:punctured-hybrid} depicted as solid line, \cref{cor:hybrid-prange} as dashed line.}
	\label{fig:punctured-hybrid-plots}
\end{subfigure}%
\hspace{0.04\textwidth}
\begin{subfigure}[t]{.475\textwidth}
	    \centering
    	\resizebox{\textwidth}{!}{	\begin{tikzpicture}
	\begin{axis}[
		y tick label style={
			/pgf/number format/.cd,
			fixed,
			precision=3,
			/tikz/.cd
		},
		x tick label style={
			/pgf/number format/.cd,
			fixed,
			1000 sep={},
			precision=2,
			/tikz/.cd
		},
		ytick={0.5,0.6,0.7,0.8,0.9,1},
		xlabel={qubit-reduction factor $\delta$},
		ylabel={time exponent $t(\delta)$},
		legend cell align={left},
		legend entries={\HybridPrange, $n=1000$,$n=2500$,$n=4000$,asymptotic}
		]


		\pgfplotstableread{plots/first_R=08_small_w.txt}
		\cTradeOff
		\addplot[color=black] table[x = X,y=Y] from \cTradeOff ;

		\pgfplotstableread{plots/second_concrete_1000.txt}
		\cTradeOff
		\addplot[dashed] table[x = X,y=Y] from \cTradeOff ;
		
		\pgfplotstableread{plots/second_concrete_2500.txt}
		\cTradeOff
		\addplot[dotted] table[x = X,y=Y] from \cTradeOff ;
		
		\pgfplotstableread{plots/second_concrete_4000.txt}
		\cTradeOff
		\addplot[dashdotted] table[x = X,y=Y] from \cTradeOff ;
		
		\pgfplotstableread{plots/second_mceliece_asym.txt}
		\cTradeOff
		\addplot[opacity=0.5] table[x = X,y=Y] from \cTradeOff ;

	\end{axis}	
	
\end{tikzpicture}}
    	\caption{Time exponents for concrete parameter sets. McEliece parameter sets satisfy $k=0.8n$ and $\omega=\left\lfloor\frac{n}{5\log n}\right\rfloor$.}
    	\label{fig:first-v-second}
\end{subfigure}

\caption{Comparison of time exponents of \HybridPrange{} and \PuncturedHybrid{} (y-axis) plotted as a function of the qubit-reduction factor $\delta$ (x-axis).}
\end{figure}
%
	However, for concrete medium sized parameters this asymptotic behaviour is not necessarily obtained. In \cref{fig:first-v-second} we therefore show a comparison of both trade-offs for concrete McEliece parameter sets. Here we furthermore used the more accurate time complexity formula involving binomial coefficients rather than its asymptotic approximation to compute the speedup $t(\delta)$. Note that the discontinuity for our new trade-off in these cases is due to the limitation to discrete choices of $p$. We find that for parameters up to $n \approx 2500$ both trade-offs remain superior to each other for certain reduction factors $\delta$. For larger values of $n$ the \PuncturedHybrid{} algorithm becomes favourable for all $\delta$.
	In the BIKE and HQC settings the \PuncturedHybrid{} algorithm is favourable already for small parameters corresponding to $n=1000$.
	
\subsection{Combining both trade-offs}

Next we show how to combine both previous trade-offs to achieve an improved version. Therefore we first reduce the code length and dimension, again by guessing $\alpha n$ zero coordinates of $\vec e$ and removing the corresponding columns form $H$. The remaining instance is then solved using our \PuncturedHybrid{} algorithm (compare also to \cref{fig:combined-tradeoff}). If the initial guess was wrong, this procedure will not finish. Thus, we introduce an abort of the execution after the expected amount of iterations of \PuncturedHybrid{} on a correct guess.

	\begin{figure}[ht]
	\centering
	\scalebox{0.6}{\begin{tikzpicture}

\node at (-5.25,-2.4) {\huge$\mathbf{e}$};
\node at (6.2,2.45) {\huge$\mathbf{s}$};

\draw[opacity=0.5,densely dotted] (-4,2) -- (0.5,-1.65);
\node at (-2.25,0.75) {\huge$I_{n-k-\beta n}$};
\draw[opacity=0.5,dashed] (0.5,2) node (v1) {} -- (0.5,-1.65);

\node at (2.35,0.75) {\huge$H'$};
\fill[pattern=north west lines]  (4.05,-2.15) rectangle (0.5,-2.65);
\fill[color=white,opacity=0.5]  (4.05,-2.15) rectangle (0.5,-2.65);
\draw (0.5,-2.15) -- (0.5,-2.65);

\fill[pattern=north west lines]  (4.05,-2.15) rectangle (0.5,-2.65);
\fill[color=white,opacity=0.5]  (4.05,-2.15) rectangle (0.5,-2.65);
\draw (0.5,-2.15) -- (0.5,-2.65);

\fill[pattern=north west lines]  (-4,-2.15) rectangle (-0.8,-2.65);
\fill[color=white,opacity=0.5]  (-4,-2.15) rectangle (-0.8,-2.65);
\draw (-0.8,-2.15) -- (-0.8,-2.65);

\fill[opacity=0.5,pattern=crosshatch]  (4.05,2)  rectangle (5.45,-1.65);
\fill[opacity=0.65,color=white]  (4.05,2) node (v3) {} rectangle (5.45,-1.65);

\draw  (-4,-2.15) rectangle (5.45,-2.65);
\draw  (5.95,2) node (v2) {} rectangle (6.45,-1.65);
\node at (-5.25,0.5) {\huge$H$};

\fill[opacity=0.5,pattern=crosshatch]  (5.95,-0.5)  rectangle (6.45,-1.65);
\fill[opacity=0.65,color=white]  (5.95,-0.5) rectangle (6.45,-1.65);
\draw[dashed] (6.45,-0.5) -- (5.95,-0.5);

\fill[opacity=0.5,pattern=crosshatch]  (-4,-0.5)  rectangle (4.05,-1.65);
\fill[opacity=0.65,color=white]  (-4,-0.5) rectangle (4.05,-1.65);
\draw[dashed] (4.05,-0.5) -- (-4,-0.5);

\node at (-0.15,-3) {$\beta n$};

\draw (-0.8,-2.9) -- (-0.8,-3.1);
\draw (-0.8,-3) -- (-0.5,-3);
\draw (0.2,-3) -- (0.5,-3);
\draw (0.5,-2.9) -- (0.5,-3.1);

\draw (-4.45,-0.75) -- (-4.45,-0.5);
\draw (-4.55,-0.5) -- (-4.35,-0.5);
\draw (-4.45,-1.45) -- (-4.45,-1.65);
\draw (-4.35,-1.65) -- (-4.55,-1.65);

\node at (4.8,-3) {$\alpha n$};

\draw (4.05,-2.9) -- (4.05,-3.1);
\draw (4.05,-3) -- (4.45,-3);
\draw (5.15,-3) -- (5.45,-3);
\draw (5.45,-2.9) -- (5.45,-3.1);

\node[rotate=90] at (-4.45,-1.1) {$\beta n$};
\draw[opacity=0.5,dashed] (-0.8,2) -- (-0.8,-0.5);
\node at (-0.15,0.75) {\huge$\mathbf{0}$};
\node at (-0.1,-2.4) {$p$};
\draw[fill opacity=0.04,thick]  (v1) rectangle (4.05,-0.5);
\draw[thick]  (v2) rectangle (6.45,-0.5);

\draw  (-4,2) rectangle (5.45,-1.65);
\draw (4.05,-2.15) -- (4.05,-2.65);
\node at (4.8,-2.4) {\Large$\mathbf{0}$};
\draw  (v3) rectangle (v3);
\draw[dashed] (4.05,-0.5) -- (4.05,-1.65);
\end{tikzpicture}}

	\caption{Input matrix in systematic form where $\beta n$ parity-check equations are omitted and $\alpha n$ zeros of $\vec e$ are known. The vector $\vec e$ is assumed to contain weight $p$ on $\beta n$ coordinates. Framed parts are used as input to the quantum algorithm.}
	\label{fig:combined-tradeoff}

	\end{figure}
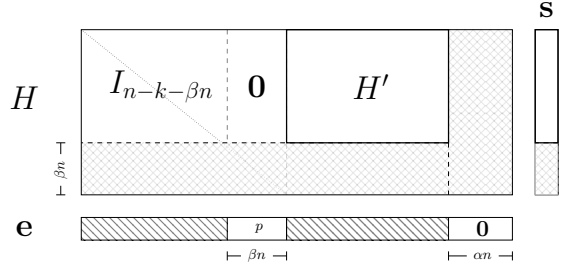
    The pseudocode of the procedure is given in \cref{alg:combined-tradeoff}. Note that here we use $\beta$ and $p$ as input parameters to \PuncturedHybrid, rather than to the choice made in \cref{alg:punctured-hybrid} (\PuncturedHybrid). 
	\begin{algorithm}
	\begin{algorithmic}[1]
		\Require{parity-check matrix $H\in\mathbb{F}_2^{(n-k)\times n}$, syndrome $\vec s \in \mathbb{F}_2^{n-k}$, weight $\omega\in[n]$, qubit reduction factor $\delta\in \llbracket0,1\rrbracket$}
		\Ensure{error vector $\vec e$ with $\wt(\vec e)=\omega$ satisfying $H\vec e=\vec s$ }
		\State choose $\alpha$ and $p$ accordingly
		\State $\beta:= (1-\frac{k}{n}) \left(\frac{\delta \frac{k}{n}}{\frac{k}{n}-\alpha}\right)$, $E:=\frac{\binom{(1-\alpha)n}{\omega}}{\binom{(1-\alpha-\beta)n}{\omega -p}\binom{\beta n}{p}}$
		\Repeat
		\State choose random permutation matrix $P\in\mathbb{F}_2^{n\times n}$ and set $\tilde{H}\leftarrow HP$
		\State $\vec e' \leftarrow$ \PuncturedHybrid$(\tilde{H}_{[(1-\alpha)n]},\vec s,\omega,\delta,\frac{\beta}{1-\alpha},p)$  \Comment{abort after $E$ iterations of the outer loop}
		\State $\vec e \leftarrow P(\vec e',0^{\alpha n})$ 
		\Until{$H\vec e = \vec s$}
		\State\Return $\vec e$
	\end{algorithmic}
	\caption{\Combination}
	\label{alg:combined-tradeoff}
	
\end{algorithm}

	\begin{restatable}[Combined Hybrid]{theorem}{combinedthm}
	Let $n\in\mathbb{N}$, $\omega\in [n]$ and $k=Rn$ for $R\in \llbracket0,1\rrbracket$. Then for any qubit reduction factor $\delta \in \llbracket0,1\rrbracket$ the $\SDP{n}{k}{\omega}$ problem can be solved in expected time $T_{\textsc{CH}}$ using $\delta (1-R)Rn^2$ qubits for the matrix representation, where 
	{\small
	\begin{align*}
	    T_{\textsc{CH}}&=\tilde{\mathcal{O}}\Bigg(\frac{\binom{n}{\omega}}{\sqrt{\binom{(1-\alpha-\beta)n}{\omega-p}\binom{(1-\beta-R)n}{\omega-p}}\binom{\beta n}{p}}
	    \cdot\max\left(1,\sqrt{\binom{(1-\alpha-\beta)n}{\omega-p}\cdot2^{-(1-\beta-R)n}}\right)\Bigg)
	\end{align*}
	}
	with $\alpha\in\llbracket0,R\rrbracket$, $\beta = (1-R) \left(1-\frac{\delta R}{R-\alpha}\right)$ and $p\in[\min(\omega,\beta n)]$. 
	\label{thm:combined-tradeoff}
	\end{restatable}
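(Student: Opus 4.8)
The plan is to prove \cref{thm:combined-tradeoff} by stacking the two previous analyses: treat \cref{alg:combined-tradeoff} as an outer guessing loop in the style of \HybridPrange, which shortens the code by guessing $\alpha n$ zero coordinates of $\vec e$, wrapped around a single invocation of \PuncturedHybrid{} on the resulting $[(1-\alpha)n,(R-\alpha)n]$ instance. Correctness, running time, and qubit count then each reduce to one composition step plus bookkeeping.

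For correctness I would fix the ``good event'' for the random permutation $P$: that $P^{-1}\vec e=(\vec e',0^{\alpha n})$ splits off $\alpha n$ zero coordinates and, simultaneously, that on the shortened instance the weight split required by \PuncturedHybrid{} holds, i.e. the $\beta n$ punctured coordinates of $\vec e'$ carry weight exactly $p$. Conditioned on this event, the argument of \cref{thm:first-tradeoff} shows $\vec e'$ solves $(\tilde H_{[(1-\alpha)n]},\vec s,\omega)$, and the argument of \cref{thm:punctured-hybrid} — applied with the substitution $n\mapsto(1-\alpha)n$, $k\mapsto(R-\alpha)n$ and puncturing fraction $\tfrac{\beta}{1-\alpha}$, exactly as passed to \PuncturedHybrid{} in \cref{alg:combined-tradeoff} — shows that \PuncturedHybrid{} recovers $\vec e'$, so $\vec e=P(\vec e',0^{\alpha n})$ is a genuine solution; if the event fails, the check $H\vec e=\vec s$ detects it and the outer loop repeats.

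For the running time I would decompose the expected number of outer iterations via the probability $q_{\mathrm{guess}}=\binom{(1-\alpha)n}{\omega}/\binom{n}{\omega}$ that the $\alpha n$ zeros are guessed correctly, and analyse one truncated \PuncturedHybrid{} call separately. The crucial observation is that $E$ as defined in \cref{alg:combined-tradeoff} equals the reciprocal of the probability $\binom{(1-\alpha-\beta)n}{\omega-p}\binom{\beta n}{p}/\binom{(1-\alpha)n}{\omega}$ that \PuncturedHybrid{} hits its desired weight split on the shortened instance, so by a Markov bound a correct guess yields the solution within $E$ outer iterations with constant probability; truncating after $E$ iterations therefore loses only a constant factor on correct guesses while bounding the wasted work on incorrect ones. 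Multiplying $q_{\mathrm{guess}}^{-1}$ by the per-call cost $E\cdot\max(1,S_1)\cdot T_{\mathrm Q}^{(1)}$ obtained from \cref{thm:punctured-hybrid}, with $S_1=\binom{(1-\alpha-\beta)n}{\omega-p}2^{-(1-\beta-R)n}$ and $T_{\mathrm Q}^{(1)}=\tilde{\mathcal O}\!\bigl(\sqrt{\binom{(1-\alpha-\beta)n}{\omega-p}/(\max(1,S_1)\binom{(1-\beta-R)n}{\omega-p})}\bigr)$ (using $(1-\alpha-\beta)n-(R-\alpha)n=(1-\beta-R)n$ for the information-set size of the doubly-reduced code), telescopes the $\binom{(1-\alpha)n}{\omega}$ factors and, after folding $\max(1,S_1)$ into $\max(1,\sqrt{S_1})$, yields precisely the stated $T_{\textsc{CH}}$.

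Finally, for the qubit count I would note that the quantum circuit is only ever run on the doubly-reduced $[(1-\alpha-\beta)n,(R-\alpha)n]$ instance, which by \cref{eqn:reduced-qubits} needs $\sim(1-\beta-R)(R-\alpha)n^2$ matrix qubits; substituting $\beta=(1-R)\bigl(1-\tfrac{\delta R}{R-\alpha}\bigr)$ gives $1-\beta-R=(1-R)\tfrac{\delta R}{R-\alpha}$, so the product collapses to $\delta(1-R)Rn^2$ as claimed, and the feasibility constraints $\alpha\in\llbracket0,R\rrbracket$ and $p\in[\min(\omega,\beta n)]$ are inherited from the range of the underlying \PuncturedHybrid{} call. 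I expect the main obstacle to be the truncation argument — making precise that aborting \PuncturedHybrid{} after $E$ outer iterations neither breaks its internal coupon-collector recovery of all $S_1$ solutions of the doubly-reduced instance (which lives inside a single outer iteration and is untouched by the truncation) nor inflates the amortized cost over the exponentially many incorrect guesses — while everything else is routine reuse of the two preceding proofs.
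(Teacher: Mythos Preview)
Your proposal is correct and follows essentially the same approach as the paper: an outer zero-guessing loop with success probability $q_{\mathrm{guess}}=\binom{(1-\alpha)n}{\omega}/\binom{n}{\omega}$ wrapped around a truncated \PuncturedHybrid{} call on the shortened $[(1-\alpha)n,(R-\alpha)n]$ instance, with the total cost assembled as $q_{\mathrm{guess}}^{-1}\cdot E\cdot \max(1,S)\cdot T_{\mathrm Q}$ and the qubit count verified by the same substitution $1-\beta-R=(1-R)\tfrac{\delta R}{R-\alpha}$. If anything, you are slightly more careful than the paper about the truncation after $E$ iterations, which the paper handles only implicitly.
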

	\begin{proof}
	The correctness follows from the correctness of \cref{alg:hybrid-prange} and \cref{alg:punctured-hybrid}. Therefore observe that for a correct guess of $\alpha n$ zero positions of $\vec e$, the expected amount of permutations needed by \PuncturedHybrid{} to find the solution is
	\[
	E:=\frac{\binom{(1-\alpha)n}{\omega}}{\binom{(1-\alpha-\beta)n}{\omega -p}\binom{\beta n}{p}}.
	\]
	Also note that \PuncturedHybrid{} is called on a code of length $n'=(1-\alpha)n$. Hence setting $\beta'=\frac{\beta}{1-\alpha}$ guarantees that $\beta'n'=\beta n$ parity equations are omitted.
	
	For the time complexity we have again with probability 
	\[
	q_\mathrm{C}:=\mathrm{Pr}\left[P^{-1}\vec e=(\vec e_1,0^{\alpha n})\right]=\frac{\binom{(1-\alpha)n}{\omega}}{\binom{n}{\omega}},
	\]
	a correct guess for $\alpha n$ zero positions (compare to the proof of \cref{thm:first-tradeoff}). 
	In each iteration of our combined algorithm we call the \PuncturedHybrid{} algorithm. Inside this subroutine $E$ iterations of the outer loop are executed, each performing 
	\[
	S=\tmt{\max\left(1,\frac{\binom{1-\beta-\alpha}{\omega-p}}{2^{-(1-R-\beta)n}}\right)}
	\]
	calls to the quantum circuit. This quantum circuit is applied to solve the syndrome decoding problem defined on an $[(1-\alpha-\beta)n,(R-\alpha)n]$-code with error-weight ${\omega-p}$ (compare to \cref{fig:combined-tradeoff}), which takes time
	\[
	    T_\mathrm{Q}=\tmo{\sqrt{\frac{\binom{(1-\alpha-\beta)n}{\omega-p}}{S\cdot\binom{(1-\beta-R)n}{\omega-p}}}}.
	\]
	Thus, eventually, the time complexity of the whole algorithm summarizes as $T_{\textsc{CH}}=q_\mathrm{C}^{-1}\cdot E\cdot T_\mathrm{Q}\cdot S$, as claimed. Finally, note that for given $\beta =  (1-R) \left(1-\frac{\delta R}{R-\alpha}\right)$ we obtain a qubit reduction by
	\[
	\frac{(R-\alpha)(1-R-\beta)}{R(1-R)}=\frac{(R-\alpha)(1-R)(1-(1-\frac{\delta R}{R-\alpha})}{R(1-R)}=\delta.
	\qedeq
	\]
\end{proof}

Next we give a comparison of the trade-off behavior in different settings. On the left in \cref{fig:combined-tradeoff-plots} we illustrate the asymptotic behaviors of the trade-offs, where $p$ and $\alpha$ for the combined trade-off were numerically optimized. It shows that the combination of both trade-offs (dashed lines) for most parameters improves on \PuncturedHybrid{} (solid line). Especially in the full distance decoding setting an improvement for nearly all $\delta$ is achieved. This is due to the fact, that the guessing of zero coordinates is an additional possibility to control the amount of solutions to the reduced instance and therefore to optimize the complexity of the \PuncturedHybrid{} subroutine. This is also the reason why we achieve no (asymptotic) improvement in the BIKE and HQC settings, here the solution is already so unique that the trade-off can not benefit from the new degree of freedom.

\begin{figure}[ht]
\centering
\begin{subfigure}[t]{.475\textwidth}
  \centering
	\resizebox{\textwidth}{!}{	\begin{tikzpicture}
	\begin{axis}[
		y tick label style={
			/pgf/number format/.cd,
			fixed,
			precision=3,
			/tikz/.cd
		},
		x tick label style={
			/pgf/number format/.cd,
			fixed,
			1000 sep={},
			precision=2,
			/tikz/.cd
		},
		ytick={0.5,0.6,0.7,0.8,0.9,1},
		xlabel={qubit-reduction factor $\delta$},
		ylabel={time exponent $t(\delta)$},
		legend cell align={left},
		legend entries={~Full Distance ($R=0.5$), ~Half Distance ($R=0.5$), ~McEliece ($R=0.8$),~BIKE / HQC ($R=0.5$)}
		]
		\addlegendimage{only marks,black}
		\addlegendimage{only marks,DarkGreen}
		\addlegendimage{only marks,Purple}
		\addlegendimage{only marks,MidnightBlue}
		

		\pgfplotstableread{plots/second_full_distance_asym.txt}
		\cTradeOff
		\addplot[color=black] table[x = X,y=Y] from \cTradeOff ;
		
		\pgfplotstableread{plots/combined_full_distance_asym.txt}
		\cTradeOff
		\addplot[color=black, dashed] table[x = X,y=Y] from \cTradeOff ;

		\pgfplotstableread{plots/second_half_distance_asym.txt}
		\cTradeOff
		\addplot[color=DarkGreen] table[x = X,y=Y] from \cTradeOff ;
		
		\pgfplotstableread{plots/combined_half_distance_asym.txt}
		\cTradeOff
		\addplot[color=DarkGreen,dashed] table[x = X,y=Y] from \cTradeOff ;
		

		\pgfplotstableread{plots/second_mceliece_asym.txt}
		\cTradeOff
		\addplot[color=Purple] table[x = X,y=Y] from \cTradeOff ;
		
		\pgfplotstableread{plots/combined_mceliece_asym.txt}
		\cTradeOff
		\addplot[color=Purple, dashed] table[x = X,y=Y] from \cTradeOff ;
		
		\pgfplotstableread{plots/second_bike_asym.txt}
		\cTradeOff
		\addplot[color=MidnightBlue] table[x = X,y=Y] from \cTradeOff ;
	\end{axis}	\end{tikzpicture}}
\end{subfigure}%
\hspace{0.04\textwidth}
\begin{subfigure}[t]{.475\textwidth}
	    \centering
    	\resizebox{\textwidth}{!}{	\begin{tikzpicture}
	\begin{axis}[
		y tick label style={
			/pgf/number format/.cd,
			fixed,
			precision=3,
			/tikz/.cd
		},
		x tick label style={
			/pgf/number format/.cd,
			fixed,
			1000 sep={},
			precision=2,
			/tikz/.cd
		},
		ytick={0.5,0.6,0.7,0.8,0.9,1},
		xlabel={qubit-reduction factor $\delta$},
		ylabel={time exponent $t(\delta)$},
		legend cell align={left},
		legend entries={\HybridPrange, \PuncturedHybrid,\Combination,asymptotic}
		]


		\pgfplotstableread{plots/first_R=08_small_w.txt}
		\cTradeOff
	    \addplot[color=black] table[x = X,y=Y] from \cTradeOff ;
	    
	    \pgfplotstableread{plots/second_concrete_2500.txt}
		\cTradeOff
		\addplot[dotted] table[x = X,y=Y] from \cTradeOff ;
		
	    \pgfplotstableread{plots/combined_concrete_2500.txt}
		\cTradeOff
		\addplot[dashdotted] table[x = X,y=Y] from \cTradeOff ;
		
        \pgfplotstableread{plots/combined_mceliece_asym.txt}
		\cTradeOff
		\addplot[opacity=0.5] table[x = X,y=Y] from \cTradeOff ;
	\end{axis}	
	
\end{tikzpicture}}
    
\end{subfigure}

\caption{Comparison of time exponents (y-axis) plotted as a function of the qubit-reduction factor $\delta$ (x-axis) for asymptotic (left, combined trade-off illustrated dashed, \cref{thm:punctured-hybrid} solid) and concrete McEliece parameters (right).}
\label{fig:combined-tradeoff-plots}
\end{figure}
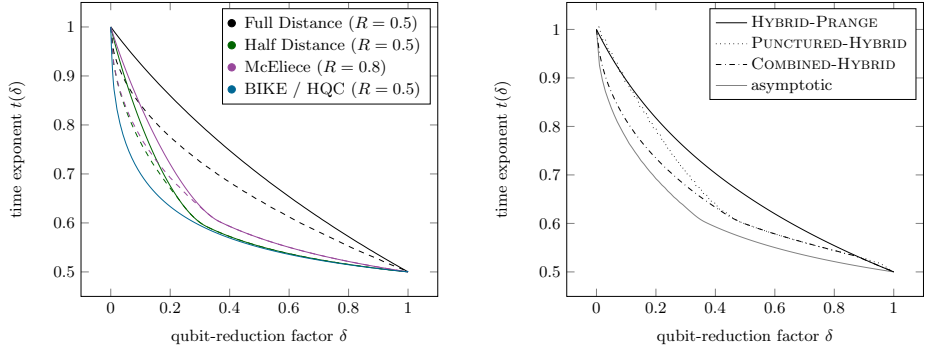
But also in the McEliece setting we achieve notable improvements. If we again consider a reduction-factor of $\delta=0.2$ the combination improves the speedup to $t(\delta)\approx0.69$ from 0.74 achieved by \PuncturedHybrid.
Furthermore, when focusing on near future realizations, i.e., the regime of small reduction factors, it is for example possible with just one percent of the qubits ($\delta=0.01$) to achieve a speedup of $t(\delta)\approx0.92$. 

On the right in \cref{fig:combined-tradeoff-plots} we show the relation between qubit reduction and speedup for concrete McEliece parameters. Here we restrict ourselves to the instance $n=2500$ for the sake of clarity. But note that for all parameter sets at least the minimum of both trade-offs is obtained with improvements especially for low reduction factors. In the BIKE and HQC setting for small parameter sets with $n\leq3000$ we achieve (small) improvements in the regime of $\delta\leq 0.05$.

\subsection{Overview and Discussion}

For convenience we state in \cref{tab:quantum-instances} the parameters of the reduced instances solved by the quantum circuits within each of our trade-offs. \cref{tab:circuit-complexities} then states the necessary amount of qubits and the resulting depth of the circuits to solve a respective instance with parameters $(n,k,\omega)$. 

\begin{table}[]
\centering
\begin{tabular}{l||c|c|c}
                                  & \multicolumn{3}{c|}{Quantum Instance Parameters} \\
                                  & \multicolumn{1}{c|}{$n'$} & \multicolumn{1}{c|}{$k'$} & \multicolumn{1}{c|}{$\omega'$} \\ \hline\hline
\HybridPrange      & $(1-\alpha)n$             & $(R-\alpha)n$             & $\omega$                        \\
\PuncturedHybrid & $(1-\beta)n$              & $Rn$                      & $\omega-p$                     \\
\Combination       & $(1-\alpha-\beta)n$       & $(R-\alpha)n$             & $\omega-p$                    
\end{tabular}
\caption{Parameters of the reduced subinstance solved by the quantum circuit called by the respective classical co-processor. }
\label{tab:quantum-instances}
\end{table}

By plugging in the values from \cref{tab:quantum-instances} into the formulas given in \cref{tab:circuit-complexities} one receives the quantum complexities of the respective classical co-processor. Here we differentiate between optimization regarding the amount of qubits and the circuit depth. The essential difference lies in the use of the Lee-Brickell improvement in case of an optimization of the depth, while the width optimized variant uses the qubit reduction technique from \cref{sec:width-improvement}.

The depth of our circuits is mainly dominated by the application of the Gaussian elimination, where the additional $\log n$ factor results from the decomposition of multi-controlled gates. Remember that $q$ is the proportion of subsets yielding a solution among all size-$(n-k)$ subsets in the case of Prange, given in \cref{eqn:prob-isd}, while $q_\textrm{LB}$ is the proportion of good subsets in the case of the Lee-Brickell algorithm stated in \cref{eqn:prob-lee-brickell}.




\begin{table}
\centering
\begin{tabular}{c||c|c|c}
\multicolumn{1}{l||}{} & $W$-Optimized  & $D$-Optimized &  \multicolumn{1}{c|}{$D$-Optimized (cyclic)}  \\ \hline\hline\\[-0.3cm]

\begin{tabular}[c]{@{}c@{}}Qubit\end{tabular} & \begin{tabular}[c]{@{}c@{}}$(n-k+2)(k+3)-7$\end{tabular} & $(n-k+1)\cdot (n+2) -3$ & $(n-k)\cdot (n+k+2) -1$  \\[0.15cm]
\begin{tabular}[c]{@{}c@{}}Depth\end{tabular}  &  $\mathcal{O}\left(\frac{n^3\log n}{\sqrt{q}}\right)$ & $\mathcal{O}\left(\frac{n^3\log n+p\binom{k}{p}}{\sqrt{q_\textrm{LB}}}\right)$ & $\mathcal{O}\left(\frac{n\left(n^2\log n+p\binom{k}{p}\right)}{\sqrt{k\cdot q_\textrm{LB}}}\right)$ \\

\end{tabular}

\caption{Depth and required qubits of the quantum circuit to solve the $\SDP{n}{k}{\omega}$. }
\label{tab:circuit-complexities}
\end{table}


Overall, we presented concrete depth and width optimized quantum circuits for the fully-fledged ISD procedure. Our tradeoffs put a special focus on the reduction of necessary qubits, targeting \emph{near-term} realizations. Following this thought, we also provide the necessary implementations of our circuits in the simulation library Qibo making a transition to a real quantum computer as easy as possible.

Although we placed a strong focus on the circuit width, we have shown that ISD  can also be implemented efficiently on a quantum computer from a depth perspective. Thus, we cannot confirm the mentioned statement of the McEliece submission regarding a higher overhead when applying a Grover search to ISD rather than AES. However, we admit that a single application of AES has a lower complexity than one iteration of an ISD algorithm, which lies in favor of the quantum security of code-based schemes. Since NIST imposes a depth-limitation on the used quantum circuits, the more depth is needed for the implementation of one iteration, the less Grover iterations can be performed. We made a first step in the direction of overcoming this issue by giving a quantized version of the Lee-Brickell improvement and by exploiting the cyclicity in the BIKE / HQC cases. Both approaches tackle the problem by reducing the number of necessary Grover iterations. The second possibility is targeting a depth reduction of a single iteration, which is dominated by performing the Gaussian elimination. We leave it as an open problem to further study the concrete quantum circuit design of advanced Gaussian elimination procedures, such as M4RI or Strassen.


\renewcommand\appendix{\par
  \setcounter{chapter}{0}%
  \setcounter{section}{0}%
  \def\@chapapp{Appendix}%
  \def\thechapter{\@Alph\c@chapter}%
  \renewcommand\thesection{\Roman{section}}}

\bibliographystyle{splncs04}
\bibliography{Citations}

\appendix
\crefalias{section}{supp}

\renewcommand\chaptername{Appendix}
\def\thechapter{S}

\chapter*{Appendix}


	
\section{Width reduced circuit}
\label{app:width-reduce}
\cref{alg:adapted-combined-circuit} shows the pseudocode for generating our width optimized circuit, not requiring the idenity part of $H$ as an input, described in \cref{sec:width-improvement}.

	\begin{algorithm}[h]
		\begin{algorithmic}[1]
		\Require{matrix $H'\in\F_2^{(n-k)\times k}$, syndrome $\vec s\in \F_2^{n-k}$, $(k+1)(n-k)+n+\lceil\log (n-k)\rceil  $ qubits}
		\Ensure{Uniform superposition over weight of all $\vec x$ with $H_I\vec x=\vec s$ for $I\in\binom{[n]}{n-k}$ where $H=(I_{n-k}\mid H')$}
		\State Initialize qubits with $(H'\mid \vec s)$
		\State Generate uniform superposition over $\Vset{n}{n-k}$ on qubits $(b_1,\ldots,b_n)$
		\State $c\leftarrow n-k$ 
			   \vspace{0.2cm}
		\For{$i = 1$ \textbf{to} $n-k$} \Comment{row swaps depending on first $n-k$ columns}
		\For{$j = 1$ \textbf{to} $i-1$}
		\If{$b_i=1$ \textbf{and} $c=n-k-j+1$}{ swap row $i$ and row $j$}
		\EndIf
		\EndFor
		\IfThen {$b_i=1$}{$c\leftarrow c-1$}
	    \EndFor
	    \vspace{0.2cm}
	    
		\For{$i = n$ \textbf{down to} $n-k+1$} \Comment{swap selected columns of $H'$ to the back}
		    \If{$b_i = 1$}
		        \State $c\leftarrow c-1$
		        \State $x_i\leftarrow 1$
        		\For{$j = i$ \textbf{to} $n-1$}
        		    \State swap column $j$ and $j+1$
        		    \State swap $x_j$ and $x_{j+1}$
        		\EndFor
        	\EndIf
        \EndFor
	    \vspace{0.2cm}
		\State Apply Gaussian elimination circuit starting with the last column, where each operation depending on column $j$ is controlled by $x_j$ (\cref{alg:gaussian})
		\State \Return $c \leftarrow$ weight of last column  
		\end{algorithmic}
		\caption{Width-reduced Combined Circuit}
		\label{alg:adapted-combined-circuit}
	\end{algorithm}
\section{Punctured Hybrid}
\label{app:punctured}
In the following corollary we state the exact form of the speedup $t(\delta)$ for our \PuncturedHybrid{} (\cref{thm:punctured-hybrid}).

\begin{corollary}[Punctured Hybrid Speedup]
	Let $n\in\mathbb{N}$, $\omega=\tau n$ and $k=Rn$ , $p=\rho n$ for $\tau,R,\rho \in \llbracket0,1\rrbracket$ and let $T_\mathrm{C}$ be as defined in \cref{eqn:classical-complexity}. Then for any qubit reduction factor $\delta \in \llbracket0,1\rrbracket$ \cref{alg:punctured-hybrid} solves the $\SDP{n}{k}{\omega}$ problem in time $(T_\mathrm{C})^{t(\delta)}$ using $\delta (1-R)Rn^2$ qubits for the matrix representation, where 
	\[
	t(\delta)=\frac{\Hf(\tau)-\beta\Hfv{\frac{\rho}{\beta}}-\frac{1-\beta}{2}\cdot\Hfv{\frac{\tau-\rho}{1-\beta}}-\frac{(1-\beta-R)}{2}\cdot\Hfv{\frac{\tau-\rho}{1-\beta-R}}+\max(0,\sigma)}{\Hf(\tau)-(1-R)\Hfv{\frac{\tau}{1-R}}}
	\]
	for $\beta = (1-\delta)(1-R)$ and $\sigma=(1-\beta)\Hfv{\frac{\tau-\rho}{1-\beta}}-(1-\beta-R)$.
	\label{cor:punctured-hybrid}
	\end{corollary}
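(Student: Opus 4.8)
The plan is to read off $t(\delta)$ as the ratio $\frac{\log T_{\textsc{PH}}}{\log T_\mathrm{C}}$, since a running time of $(T_\mathrm{C})^{t(\delta)}$ is by definition the same as $T_{\textsc{PH}}=(T_\mathrm{C})^{\log T_{\textsc{PH}}/\log T_\mathrm{C}}$. The qubit count $\delta(1-R)Rn^2$ together with the identity $\beta=(1-\delta)(1-R)$ is already established in \cref{thm:punctured-hybrid}, so the only remaining work is to convert the closed form for $T_{\textsc{PH}}$ stated there into the claimed entropy expression. Because we argue in $\tmo{\cdot}$-notation, every polynomial factor contributes only $o(n)$ to the exponent and may be discarded throughout.

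For the denominator I would start from \cref{eqn:classical-complexity}, write $\log T_\mathrm{C}=\log\binom{n}{\omega}-\log\binom{(1-R)n}{\omega}+o(n)$, substitute $\omega=\tau n$ and $k=Rn$, and apply the Stirling estimate \cref{eqn:stirling} to both binomials. This yields $\log T_\mathrm{C}=n\big(\Hf(\tau)-(1-R)\Hfv{\tfrac{\tau}{1-R}}\big)+o(n)$, i.e.\ exactly $n$ times the denominator appearing in the claimed formula.

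For the numerator I would take the logarithm of the product in \cref{thm:punctured-hybrid},
\[
\log T_{\textsc{PH}}=\log\tbinom{n}{\omega}-\tfrac12\log\tbinom{(1-\beta)n}{\omega-p}-\tfrac12\log\tbinom{(1-\beta-R)n}{\omega-p}-\log\tbinom{\beta n}{p}+\log\max\!\Big(1,\sqrt{\tbinom{(1-\beta)n}{\omega-p}\,2^{-(1-\beta-R)n}}\Big)+o(n),
\]
substitute $\omega=\tau n$, $k=Rn$, $p=\rho n$, and apply \cref{eqn:stirling} term by term. The first four summands directly give (times $n$) the terms $\Hf(\tau)$, $-\tfrac{1-\beta}{2}\Hfv{\tfrac{\tau-\rho}{1-\beta}}$, $-\tfrac{1-\beta-R}{2}\Hfv{\tfrac{\tau-\rho}{1-\beta-R}}$ and $-\beta\Hfv{\tfrac{\rho}{\beta}}$ of the claimed numerator. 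For the last summand I would use $\log\max(1,X)=\max(0,\log X)$ together with $\log\!\big(\tbinom{(1-\beta)n}{\omega-p}2^{-(1-\beta-R)n}\big)=n\sigma+o(n)$, where $\sigma$ is the quantity defined in the statement, so that this summand contributes the $\sigma$-dependent correction. Dividing $\log T_{\textsc{PH}}$ by $\log T_\mathrm{C}$ the common factor $n$ cancels, which gives $t(\delta)$ in the claimed form.

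The part requiring the most care is precisely this $\max$ term. First, one must faithfully track the factor $\tfrac12$ coming from the square root inside it when matching the result against the $\sigma$-term of the numerator. Second, $\log\max(1,X)=\max(0,\log X)$ is exact, but near $\sigma=0$ the $o(n)$ remainders of \cref{eqn:stirling} blur the two branches, so one should argue explicitly that the asymptotic identity $\log\max(1,X)=n\max(0,\sigma)+o(n)$ is still legitimate. A minor point to record is that \cref{eqn:stirling} only applies when every argument of $\Hf$ lies in $\llbracket0,1\rrbracket$; this follows from the admissibility constraint $p\in[\min(\omega,\beta n)]$ together with $\beta=(1-\delta)(1-R)\le 1-R$ (which in particular keeps $\tfrac{\tau-\rho}{1-\beta-R}$ well defined), and the running-time claim invokes \cref{rem:isd-quant} for the quantum search over the reduced instance exactly as in the proof of \cref{thm:punctured-hybrid}. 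Everything else is routine substitution.
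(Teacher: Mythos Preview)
Your proposal is correct and follows exactly the paper's own approach: define $t(\delta)=\log T_{\textsc{PH}}/\log T_\mathrm{C}$ and evaluate numerator and denominator by replacing every binomial coefficient in \cref{thm:punctured-hybrid} and \cref{eqn:classical-complexity} with its entropy approximation from \cref{eqn:stirling}. The paper's proof is a two-line remark to this effect, whereas you spell out each summand and additionally flag the delicate points (the $\tfrac12$ from the square root inside the $\max$, the behaviour near $\sigma=0$, and the admissibility of the entropy arguments), which the paper leaves implicit.
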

	\begin{proof}
	Recall that $t(\delta)=\frac{\log T_{\textsc{PH}}}{\log T_\mathrm{C}}$, where $T_{\textsc{PH}}$ is the running time of \cref{alg:punctured-hybrid}, given in \cref{thm:punctured-hybrid}. Now the statement of the corollary follows immediately by approximating the binomial coefficients in $T_{\textsc{PH}}$ and $T_\mathrm{C}$ via Stirling's formula (see \cref{eqn:stirling}).\qed 
	\end{proof}

\end{document}